
\documentclass[12pt,onecolumn]{IEEEtran}%
\usepackage{amsfonts}
\usepackage{amssymb}
\usepackage{amsmath}
\usepackage{amscd}
\usepackage{graphicx}
\usepackage{setspace}
\usepackage{multirow}%

\usepackage{color}

\usepackage{soul}

\setcounter{MaxMatrixCols}{30}
\providecommand{\U}[1]{\protect\rule{.1in}{.1in}}
\addtolength{\hoffset}{0.5 cm} \addtolength{\textwidth}{-1.1 cm}
\addtolength{\voffset}{0 cm} \addtolength{\textheight}{-0.5 cm}

\newtheorem{thm}{Theorem}[section]

\newtheorem{prop}[thm]{Proposition}

\newtheorem{rem}{Remark}[section]
\begin{document}

\title{Relay Channel with Orthogonal Components and Structured Interference Known at
the Source}
\author{\normalsize{Ka\u{g}an~Bakano\u{g}lu$\dagger$ \hbox{ } Elza~Erkip$\dagger$ \hbox{ } Osvaldo~Simeone$\ddagger$ \hbox{ } Sholomo Shamai (Shitz)$\diamond$}
\thanks{$\dagger$ Dept. of Electrical and Computer Engineering, Polytechnic Institute of New York University, Brooklyn, NY 11201, USA (email:%
kbakan01@students.poly.edu, elza@poly.edu).}
\thanks{$\ddagger$ Center for Wireless Communications and Signal Processing Research, New Jersey Institute of Technology, Newark, New Jersey%
07102-1982, USA (email: osvaldo.simeone@njit.edu).}
\thanks{$\diamond$Dept. of Electrical Engineering, Technion Institute of Technology, Technion City,
Haifa 32000, Israel (email: sshlomo@ee.technion.ac.il).}}%
\date{}
\maketitle
\doublespacing
\vspace{-0.7cm}
\begin{abstract}
A relay channel with orthogonal components that is affected by an interference signal that is
non-causally available only at the source is studied. The interference signal has
structure in that it is produced by another transmitter communicating with its
own destination. Moreover, the interferer is not willing to adjust its
communication strategy to minimize the interference. Knowledge of the
interferer's signal may be acquired by the source, for instance, by exploiting
HARQ retransmissions on the interferer's link. The source can then utilize the
relay not only for communicating its own message, but also for cooperative
interference mitigation at the destination by informing the relay about the
interference signal. Proposed transmission strategies are based on partial decode-and-forward (PDF) relaying and leverage the
interference structure. Achievable schemes are derived for discrete memoryless
models, Gaussian and Ricean fading channels. Furthermore, optimal strategies are
identified in some special cases. Finally, numerical results bring insight
into the advantages of utilizing the interference structure at the source, relay or destination.

\end{abstract}

\pagenumbering{arabic}
\vspace{-0.9cm}
\section{Introduction}

Interference provides a major impairment for many current and envisioned
wireless systems. Techniques that are able to mitigate interference are thus
expected to be of increasing importance in the design of wireless networks.
Two critical features of interfering signals can be leveraged to make the task
of interference management more effective. The first is that interference is
\textit{structured}, as it typically arises from the transmissions of other wireless users. The second is that\textit{ information about the interference} can be
obtained by wireless nodes in the vicinity of the interferer in a number of
relevant scenarios. As an example, assume that the interferer employs
retransmissions (HARQ) on its link. A\ node in the vicinity may be able to
decode a prior retransmission and use this information in order to facilitate
interference mitigation. Another scenario where interference information is
conventionally assumed is cognitive radio.

In this paper, we investigate interference mitigation strategies for a
cooperative communication scenario in which a source communicates via an out of band relay
to a destination in the presence of an external interferer. The interferer is
not willing, or not allowed, to change its transmission strategy to reduce
interference on the destination. The source is able to obtain information
about the interferer signal prior to transmission in the current block. We are interested in studying
effective ways to use such interference information at the source, in particular, the ones that leverage the structure of the interference.

The source can exploit the interference structure in a number of ways. For
instance, the structure of the interference signal potentially allows the source
to reduce the amount of spectral resources necessary for communicating
interference information to the relay. A second way to take advantage of the
interference structure is for the source, possibly with the help of the relay,
to help reception of the interfering signal at the destination so that the destination can decode and remove the interference. In
this work, we will explore these possibilities and assess the advantages of
strategies that exploit the interferer's structure with respect to the
techniques studied in \cite{Zaidi_new} that assume an unstructured interferer.
\vspace{-0.5cm}
\subsection{Related Work}

A simple model for the interference signal assumes that it is
unstructured and, in particular, that it consists of an independent
identically distributed (i.i.d.) sequence. This model is accurate, for
instance, if the interference is the sum of the contributions of many
interferers, all of comparable powers. In information-theoretic terms, an
i.i.d. interference can be modelled as the ``state'' of a channel. The capacity
of a state-dependent memoryless channel, where the state sequence (i.e., the
interference) is available non-causally at the transmitter, is established by
Gel'fand Pinsker in \cite{Gelfand} (see also \cite{heegard}). Costa
\cite{Costa} applied Gel'fand and Pinsker's (GP) result to the Additive White
Gaussian Noise (AWGN) model with additive Gaussian state, giving rise to the so called Dirty Paper Coding
(DPC) technique. DPC achieves the state-free capacity even though the state is
not known at the receiver. It was shown in \cite{cohen}, \cite{zamir}, that
this principle continues to hold even if the state is not Gaussian. However when there is no channel state information at the transmitter (CSIT), DPC can no longer achieve state-free capacity for AWGN with additive Gaussian state. This aspect for various assumptions on the channel gains was captured and studied in \cite{Zhang}-\cite{Khina}.

Extensions to the multiuser case were performed by Gel'fand and Pinsker in
\cite{gelfand2} and by Kim \textit{et al.} in \cite{9}\cite{thesis}. In
particular, in \cite{9}\cite{thesis} it is proved that for MACs multi-user versions of GP and DPC,
referred to as multi-user GP (MU-GP) and multi user DPC (MU-DPC) respectively, achieve optimal performance. In
\cite{1}, Somekh-Baruch \textit{et al.} considered a memoryless two-user MAC,
with the state available only to one of the encoders. The capacity region is shown to be obtained
by generalized GP (GGP) and generalized DPC (GDPC). The scenario
studied in this paper, but with an i.i.d. state is investigated in \cite{Zaidi_new}, \cite{Zaidi_EU} for a Discrete Memoryless (DM) and Gaussian relay channels with an in-band relay and \cite{Zaidi_orthogonal} \cite{Kagan_10} for a DM and Gaussian relay channel with an out-of-band relay where lower and upper bounds on the capacity are
derived.


With a single dominating interferer, interference structure can be utilized. This was recognized in \cite{maric}, where a scenario in which a
transmitter-receiver pair communicates in the presence of a single interferer
is studied. It is shown therein that using GP coding, and hence treating the
interference as if it were unstructured, it is generally suboptimal and
\textit{interference forwarding} with joint decoding at the destination can be
beneficial \cite{2}. This aspect is further studied in
\cite{osvaldo} for a MAC with structured interference available at one
encoder, in \cite{Kagan_10} for a Gaussian relay channel with an out-of-band relay and in \cite{liu} for a cognitive Z-interference channel, where extensions of the techniques proposed in \cite{maric} are investigated.
\vspace{-0.5cm}
\subsection{Contributions and Organization}

In this paper, we study interference mitigation techniques for the
relay channel with orthogonal components~\cite{ElGamal} and with an external interferer whose signal is non-causally
available only at the source. The relay channel with orthogonal components model is chosen due to its ability to model half-duplex communications and availability of capacity achieving strategies~\cite{ElGamal}.  We propose several techniques for discrete
memoryless, AWGN and Ricean fading channels that leverage interference structure to different
degrees. We also establish optimality of specific transmission strategies for
several special cases. Finally, numerical results bring insight into the
advantages of interference mitigation techniques that exploit the interference structure.

\vspace{-0.3cm}
\section{System Model}

\label{system_model} The scenario under study consists of a relay channel with
an orthogonal source-to-relay link in the presence of an interferer. In this
model, the source sends two different signals, one to the relay and one to the
destination with the help of the relay in orthogonal channels. The interference signal is available
non-causally to the source as depicted in Fig.
\ref{More_simplified_relay_channel_w_interference_informed_source}. We first
consider a Discrete Memoryless Channel (DMC)\ version of the channel, which is
described by the conditional probability mass functions (pmfs) $P_{Y_{D}%
|X_{SD}X_{R}X_{I}}$ and $P_{Y_{R}|X_{SR}X_{R}}$ where $Y_{D}\in
\mathcal{Y}_{D}$, $Y_{R}\in\mathcal{Y}_{R}$, $(X_{SD},X_{SR})\in
\mathcal{X}_{SD}\times\mathcal{X}_{SR}$, $X_{R}\in\mathcal{X}_{R}$ and
$X_{I}\in\mathcal{X}_{I}$ are the destination ($D$) output, the relay ($R$)
output, the source ($S$) input, the relay ($R$) input and the interference
($I$) signal, respectively. The pmf $P_{Y_{D}|X_{SD}X_{R}X_{I}}$ describes
the stochastic relation between the signals transmitted by the source towards
the destination ($X_{SD}$), by the relay ($X_{R}$), and by the interferer
($X_{I}$) and the signal received at the destination ($Y_{D}$). Similarly, the
pmf $P_{Y_{R}|X_{SR}X_{R}}$ represents the relationship between the signals
transmitted by the source towards the relay ($X_{SR}$) and by the relay
($X_{R})$ and the signal received at the relay ($Y_{R}$).

The source wishes to transmit a message $W$ to the destination with the help
of the relay in $n$ channel uses. The message $W$ is uniformly distributed
over the set $\mathcal{W}=\{1,\ldots,2^{nR}\}$, where $R$ is the rate in
bits/channel use. The interferer employs a fixed (and given) codebook that is
not subject to design. In particular, the codebook of the interferer is
assumed to be chosen by the interfering terminal independently to communicate
with some other destination which is not modeled explicitly. The message
$W_{I}$ of the interferer is assumed to be uniformly distributed over the set
$\mathcal{W_{I}}=\{1,\cdots,2^{nR_{I}}\}$, where $R_{I}$ is the interferer's
rate in bits/channel use. We assume that the interferer's codebook is
generated according to a pmf $P_{X_{I}}$. The generated codebook of the
interferer is known to all nodes. Furthermore, the interferer's message
$W_{I}$ is known to the source. In the sequel we use the standard definitions of achievable rates and probability of error \cite{Cover}.


We also consider the AWGN scenario shown in Fig.
\ref{More_simplified_relay_channel_w_interference_informed_source_awgn}. For
this model, the input and output relations at time instant $i$ are given as
\vspace{-0.4cm}
\begin{align}
Y_{R,i}   =h_{SR,i}X_{SR,i}+Z_{R,i} \label{gaussian_model} \hbox{ } \hbox{ }\hbox{ and } \hbox{ } \hbox{ } Y_{D,i}  =h_{SD,i}X_{SD,i}+h_{RD,i}X_{R,i}+h_{I,i}X_{I,i}+Z_{D,i}%
\end{align}
\vspace{-1.3cm}\newline
where the noises $Z_{D,i}$ and $Z_{R,i}$ are independent zero mean complex
Gaussian random variables with unit variance, and $h_{SR,i}$, $h_{SD,i}$,
$h_{RD,i}$ and $h_{I,i}$ are the complex valued channel gains accounting for
propagation from the source to the relay ($h_{SR,i}),$ from the source to the
destination ($h_{SD,i}$), from the relay to the destination ($h_{RD,i}$), and
from the interferer to the destination ($h_{I,i}$), respectively.

The codewords of the source $X_{SR}^{n}$ and $X_{SD}^{n}$ are
subject to a total energy constraint $nP_{S}$ and the codewords of the relay
$X_{R}^{n}$ is subject to power constraint $nP_{R}$ as
 \vspace{-0.4cm}
\begin{subequations}
\label{power_constraints}%
\begin{align}
\frac{1}{n}\sum_{i=1}^{n}\left(\mathbb{E}\left[  |X_{SR,i}|^{2}\right]
+\mathbb{E}\left[  |X_{SD,i}|^{2}\right]\right)   &  \leq P_{S} \hbox{ } \hbox{ } \hbox{ and } \hbox{ } \hbox{ } \frac{1}{n}\sum_{i=1}^{n}\mathbb{E}\left[  |X_{R,i}|^{2}\right]    \leq
P_{R}.\label{relay_power_constraint}%
\end{align}  \vspace{-1.2cm}\newline
We assume that the interferer codebook is generated i.i.d. with complex Gaussian distribution
with zero mean and power $P_{I}$. We use the notation $\mathcal{C}%
(x)=\log_{2}(1+x)$.

For the AWGN model (\ref{gaussian_model}), we study the following two
scenarios: (\textit{i}) \emph{No fading}: All channel gains remain constant
over the entire coding block and are perfectly known to all
nodes; (\textit{ii}) \emph{Ergodic fading}: All channel gains change in an
ergodic fashion. The instantaneous values of channel gains are not known to
the transmitters but are available at the receivers. Specifically, $h_{SR}$ is
known at the relay and $h_{SD}$, $h_{RD}$, $h_{I}$ are known at the
destination. Channel statistics instead are known at all nodes. In particular,
we assume that channel gains $h_{SR}$, $h_{SD}$, $h_{RD}$ and $h_{I}$ are
independent Ricean distributed with parameters $K_{SR}$, $K_{SD}$, $K_{RD}$,
$K_{I},$ respectively.

 \vspace{-0.2cm}
\section{Achievable Rates for DM and AWGN Channels\label{ach_rates}}
While reference \cite{Zaidi_new} focuses on achievable rates for the case
where the interference signal $X_{I}^{n}$ is i.i.d., here we concentrate on
techniques that exploit the interference structure, as modeled in the
previous section. The advantages of leveraging interference structure will be
discussed in Sec. \ref{num_res} via numerical results through comparison with
the techniques proposed in \cite{Zaidi_new} (which will be also recalled below
for completeness).

The proposed techniques are based on the following considerations. In
\cite{ElGamal}, El Gamal and Zaidi prove the optimality of partial
decode-and-forward (PDF) for the relay channel with orthogonal components in
Fig. \ref{More_simplified_relay_channel_w_interference_informed_source}
without interference. Motivated by this, we assume that the relaying strategy
for the source message is based on PDF. Specifically, the source message $W$
is split into two independent messages, $W=(W^{\prime},W^{\prime\prime})$,
where $W^{\prime}$ is sent through the relay and $W^{\prime\prime}$ is sent
directly to the destination. The messages $W^{\prime}$ and $W^{\prime\prime}$
are uniformly distributed over the set $\mathcal{W^{\prime}}=\{1,\cdots
,2^{nR^{\prime}}\}$ and $\mathcal{W^{\prime\prime}}=\{1,\cdots,2^{nR^{\prime
\prime}}\}$, respectively, and the total rate is $R=R^{\prime}+R^{\prime
\prime}$.

Interference mitigation is utilized either by the source only or by both the
source and the relay in a cooperative fashion. In order to perform cooperative
interference mitigation, the source needs to share the interference
information with the relay. The structure of the interference plays an
important role for the two phases of informing the relay of the interference
and of interference mitigation towards the destination. We categorize the
possible strategies in both phases as follows:
\end{subequations}
\begin{itemize}
\item \emph{Communication of interference to the relay}: When the source
chooses to inform the relay about the interfering signal, it has two options:

\begin{enumerate}
\item \emph{Digital interference sharing}: The structure of the interference
is exploited as follows. The source encodes the interference index $W_{I}$
into a codebook (not necessarily the same as the interferer's codebook) and
sends it to the relay through the orthogonal source-relay ($S-R$) channel. The
relay then decodes the interference index $W_{I}$.

\item \emph{Compressed interference sharing}: The structure of the
interference is not used and the interference is treated as an i.i.d.
sequence. Specifically, the source simply quantizes the interference sequence
$X_{I}^{n}$ and forwards the compressed description to the relay through the
orthogonal source-relay channel. The relay hence recovers the interference
sequence with some quantization distortion.
\end{enumerate}

\item \emph{Interference mitigation at the destination}: There are several
interference mitigation scenarios applicable to our model depending on the
availability of interference information at the relay. We mainly concentrate
on two approaches:

\begin{enumerate}
\item \emph{Structured approach}: The structure of the interference is
exploited at the destination to decode and remove the interference signal.
Decoding can be facilitated by having the source and/or the relay forward
information about the interference to the destination. When the source does
not inform the relay about the interfering signal, interference forwarding is
performed by the source only. Otherwise, interference forwarding is done
jointly by the source and the relay. In the AWGN\ channel, interference
forwarding is performed by the source and/or relay by transmitting signals
that are coherent with the interferer's signal, so that the correlation
between transmitted signal and interference is positive;

\item \emph{Unstructured approach}: The structure of the interference is
ignored at the destination and the interference is treated as an i.i.d. state.
Interference precoding via GP, MU-GP or GGP for the DMC\ model, and DPC,
MU-DPC and GDPC for the AWGN model, are utilized by the source only or by the
source and the relay jointly depending on the availability of interference
information at the relay. This class of techniques was extensively explored in
\cite{Zaidi_new} and will be considered here only in combination with the
digital approach mentioned above (not applicable in the unstructured model of
\cite{Zaidi_new}), and for reference.
\end{enumerate}
\end{itemize}

Below, we list proposed achievable schemes based on the above categorization.

We only consider the scenario where the source and the relay cooperate for both source signal and interference mitigation. Strategies for which the source uses the relay only for signal forwarding, but not for interference management, are the special cases of the schemes below.

\subsubsection{Scheme (D,U) (Digital interference sharing, Unstructured
approach)}

In this scheme, the source sends the interference digitally to the relay, so
that the relay is fully informed about the interference sequence. In addition,
the source also forwards part of the source message to the relay according to
PDF. Then, the source and the relay follow the unstructured approach by
jointly employing multi-user GP (MU-GP) \cite{thesis} to forward the source message.

\begin{prop}
\label{prop(2,D,U,DM)} For Scheme (D,U), the following rate is achievable for
the DM model: \vspace{-0.4cm}
\begin{equation}
R_{(D,U)}=\max_{{}}\min\left\{
\begin{array}
[c]{ll}%
I(U_{S};Y_{D}|U_{R})-I(U_{S};X_{I}|U_{R})+(I(X_{SR};Y_{R}|X_{R})-R_{I})^{+}, &
\\
I(U_{S}U_{R};Y_{D})-I(U_{S}U_{R};X_{I}) &
\end{array}
\right.  \label{(2,D,U,DM)}%
\end{equation} \vspace{-1cm}\newline
where the maximization is taken over the input pmfs $P_{U_{S}U_{R}X_{R}X_{SR}X_{SD}%
|X_{I}}$ of the form $P_{X_{SR}|X_{R}X_{I}}$\newline $P_{U_{S}U_{R}X_{R}%
X_{SD}|X_{I}},$ where $U_{S},U_{R}$ are finite-alphabet auxiliary random
variables.
\end{prop}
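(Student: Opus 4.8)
The plan is to prove achievability by combining block-Markov partial decode-and-forward with a superposition multi-user Gel'fand--Pinsker (MU-GP) layer, using the orthogonal source-relay link as a noiseless digital pipe that carries both the relay portion of the message and the interference index. First I would split the message as $W=(W',W'')$ with rates $R'$ and $R''$ and organize transmission into $B$ blocks. The key scheduling idea is that, during block $b$, the source uses $X_{SR}$ to convey to the relay both $W'_b$ and the interference index $W_{I,b+1}$ that the relay will need \emph{one block ahead}; this advance delivery is what lets the relay precode coherently against the interference it will face while transmitting $X_R$. Since the relay knows its own $X_R$, the $S$-$R$ link is effectively a point-to-point channel of capacity $I(X_{SR};Y_R|X_R)$ that must carry $R_I+R'$ bits, yielding $R'\le (I(X_{SR};Y_R|X_R)-R_I)^{+}$; I would work in the regime $R_I\le I(X_{SR};Y_R|X_R)$ so that the relay can reconstruct $X_I^n$ from $W_I$ and the known interferer codebook, the leftover $(I(X_{SR};Y_R|X_R)-R_I)^{+}$ then serving $W'$.

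For the destination-bound transmission I would build a superposition MU-GP codebook consistent with the factorization $P_{U_SU_RX_RX_{SD}|X_I}$: generate $2^{n(R'+\tilde R')}$ cloud centers $U_R^n$ and, conditioned on each, $2^{n(R''+\tilde R'')}$ satellites $U_S^n$, then partition them into $2^{nR'}$ and $2^{nR''}$ bins respectively. The relay, knowing $W'$ and $X_I^n$, selects in bin $W'$ a $U_R^n$ jointly typical with $X_I^n$ (covering lemma, feasible if $\tilde R'>I(U_R;X_I)$); the source, which reproduces this same choice because it shares $W'$ and $X_I^n$ and uses an identical deterministic rule, then selects in bin $W''$ a $U_S^n$ jointly typical with $(U_R^n,X_I^n)$ (feasible if $\tilde R''>I(U_S;X_I|U_R)$). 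The inputs $X_R$ and $X_{SD}$ are produced as the prescribed functions of $(U_R,U_S,X_I)$, so that source and relay transmit coherently against the common interference.

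The error analysis then produces the two destination branches. Using backward decoding across blocks, the destination recovers both the cloud index and the satellite index of each block: jointly decoding $(U_R^n,U_S^n)$ requires $(R'+\tilde R')+(R''+\tilde R'')<I(U_SU_R;Y_D)$, while decoding the satellite given the correct cloud requires $R''+\tilde R''<I(U_S;Y_D|U_R)$. Substituting $\tilde R'=I(U_R;X_I)$ and $\tilde R''=I(U_S;X_I|U_R)$, so that $\tilde R'+\tilde R''=I(U_SU_R;X_I)$, converts these into $R'+R''\le I(U_SU_R;Y_D)-I(U_SU_R;X_I)$ and $R''\le I(U_S;Y_D|U_R)-I(U_S;X_I|U_R)$. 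Together with the relay-link bound $R'\le(I(X_{SR};Y_R|X_R)-R_I)^{+}$, the maximum achievable sum $R=R'+R''$ over the resulting polytope equals exactly the minimum of the two branches, and maximizing over all admissible input pmfs gives (\ref{(2,D,U,DM)}).

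The main obstacle is the joint covering step of the MU-GP layer and its interaction with the block-Markov structure. One must verify that the relay and the source can \emph{simultaneously} bin their auxiliaries against the \emph{same} interference realization while respecting the superposition order, and that the source's prediction of the relay's codeword is logically consistent across blocks (the $U_R^n$ transmitted in block $b+1$ encodes $W'_b$, known to the source since block $b$, and is precoded against $X_{I,b+1}$, known non-causally). This requires a careful multivariate typicality argument and attention to the one-block delay; once the covering and packing conditions are assembled, the routine steps, namely typical-set encoding, the union bound over error events, and letting $B\to\infty$ and then $n\to\infty$ to absorb the initialization-block loss and the vanishing $\epsilon$ terms, close the proof and I would omit them.
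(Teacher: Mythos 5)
Your proposal is correct and takes essentially the same route as the paper: PDF message splitting, digital delivery of $(W',W_I)$ over the orthogonal $S$--$R$ link giving $R'\le (I(X_{SR};Y_R|X_R)-R_I)^{+}$, and joint source--relay Gel'fand--Pinsker precoding over the resulting state-dependent MAC with common message, producing exactly the region (\ref{rate_MU-GP_MAC}) and hence (\ref{(2,D,U,DM)}). The only difference is level of detail: where the paper invokes \cite{thesis}, \cite{1} for the MU-GP region with common message, you re-derive it explicitly via superposition binning with covering rates $\tilde R'>I(U_R;X_I)$ and $\tilde R''>I(U_S;X_I|U_R)$, plus the block-Markov timing and backward decoding that the paper leaves implicit, all of which is consistent with its sketch.
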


\textit{Sketch of the proof}: The message $W$ is split into two messages
$W^{\prime}$ and $W^{\prime\prime}$. The source conveys the message
$W^{\prime}$ to the relay together with interference index $W_{I}$ which leads
to the constraint $R^{\prime}\leq I(X_{SR};Y_{R}|X_{R})-R_{I}$. Since both the
source and the relay have the interference knowledge, they are able to
implement MU-GP \cite{thesis} to send $W^{\prime}$ and $W^{\prime\prime}$ to
the destination. Note that unlike \cite{thesis}, here the two encoders (source
and relay) have the common message $W^{\prime},$ so that the channel from the
source and the relay to the destination is equivalent to the state
(interference) dependent MAC with common message and informed encoders. An
achievable rate region can be derived by following similar steps in
\cite{thesis}\cite{1}, obtaining
 \vspace{-0.5cm}\begin{subequations}
\label{rate_MU-GP_MAC}%
\begin{align}
R^{\prime\prime}\leq I(U_{S};Y_{D}|U_{R})-I(U_{S};X_{I}|U_{R})\\
R^{\prime}+R^{\prime\prime}\leq I(U_{S}U_{R};Y_{D})-I(U_{S}U_{R};X_{I}) &
\end{align}
 \vspace{-1.5cm} \newline for some distribution $P_{U_{S} U_{R}X_{R}X_{SD}|X_{I}}$. Incorporating
(\ref{rate_MU-GP_MAC}) with the constraint on $R^{\prime}$ gives us (\ref{(2,D,U,DM)}).\hfill$\Box$
\end{subequations}
\begin{prop}
\label{prop(2,U,D,GC)} For Scheme (D,U), the following rate is achievable for
the AWGN model (\ref{gaussian_model}):
\vspace{-0.4cm} \begin{align}
R_{(D,U)}=\displaystyle\max_{\substack{\rho_{W^{\prime}},\rho_{W^{\prime
\prime}},\gamma:\\|\rho_{W^{\prime}}|,|\rho_{W^{\prime\prime}}|,\gamma
\in\lbrack0,1]}} &  \min\left\{
\begin{array}
[c]{ll}%
\mathcal{C}\left(  P_{W^{\prime\prime}}\right)  +(\mathcal{C}\left(
|h_{SR}|^{2}(1-\gamma)P_{S}\right)  -R_{I})^{+}, & \\
\mathcal{C}\left(  P_{W^{\prime\prime}}+P_{W^{\prime}}\right)   &
\end{array}
\right.  \label{(2,D,U,GC)}\\
&  \hbox{ subject to }|\rho_{W^{\prime}}|^{2}+|\rho_{W^{\prime\prime}}%
|^{2}\leq1\nonumber
\end{align}
\vspace{-1.2cm} \newline
where $P_{W^{\prime}}=(|h_{RD}|\sqrt{P_{R}}%
+|h_{SD}||\rho_{W^{\prime}}|\sqrt{\gamma P_{S}})^{2}$ and $P_{W^{\prime
\prime}}=|h_{SD}|^{2}|\rho_{W^{\prime\prime}}|^{2}\gamma P_{S}$.
\end{prop}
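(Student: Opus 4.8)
\textit{Proof plan}: The plan is to specialize the discrete-memoryless region of Proposition~\ref{prop(2,D,U,DM)} to the Gaussian channel~(\ref{gaussian_model}) by evaluating its mutual-information terms under jointly Gaussian inputs and dirty-paper-coding (DPC) auxiliaries. Because the AWGN expression~(\ref{(2,D,U,GC)}) inherits exactly the $\max\min$ structure of~(\ref{(2,D,U,DM)}), it is enough to show that a suitable Gaussian choice makes the three constituent terms equal to $\mathcal{C}(P_{W''})$, $(\mathcal{C}(|h_{SR}|^2(1-\gamma)P_S)-R_I)^+$, and $\mathcal{C}(P_{W'}+P_{W''})$, respectively.

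First I would fix the input design. I allocate $(1-\gamma)P_S$ to the source-to-relay input $X_{SR}$ (an independent Gaussian codeword carrying $W'$ and the digital interference index $W_I$) and $\gamma P_S$ to the source-to-destination input $X_{SD}$, with $\gamma\in[0,1]$. I split $X_{SD}$ into two independent Gaussian parts, one conveying $W'$ with power $|\rho_{W'}|^2\gamma P_S$ and one conveying $W''$ with power $|\rho_{W''}|^2\gamma P_S$; the constraint $|\rho_{W'}|^2+|\rho_{W''}|^2\leq1$ then guarantees $\mathbb{E}[|X_{SD}|^2]\leq\gamma P_S$, so the source energy budget is met. The relay input $X_R$ of power $P_R$ also carries $W'$ and is made coherent with the $W'$-part of $X_{SD}$, so that at the destination these signals add in amplitude and contribute received power $P_{W'}=(|h_{RD}|\sqrt{P_R}+|h_{SD}||\rho_{W'}|\sqrt{\gamma P_S})^2$, while the $W''$-part contributes the independent power $P_{W''}=|h_{SD}|^2|\rho_{W''}|^2\gamma P_S$. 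Taking $X_{SR}$ conditionally independent of the destination signals given $(X_R,X_I)$ makes this choice compatible with the required factorization $P_{X_{SR}|X_R X_I}P_{U_S U_R X_R X_{SD}|X_I}$.

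Next I would null the interference. With $X_I$ i.i.d.\ Gaussian, I take $U_R$ and $U_S$ as nested Costa/MU-DPC auxiliaries, each an affine combination of its signal component and $X_I$ with an inflation factor chosen to cancel $h_I X_I$ at the destination. With this choice the state penalty vanishes and each information difference collapses to the corresponding interference-free rate: $I(U_S;Y_D|U_R)-I(U_S;X_I|U_R)=\mathcal{C}(P_{W''})$, the $W''$-rate once the coherent $W'$-signal (encoded in $U_R$) has been removed, and $I(U_S U_R;Y_D)-I(U_S U_R;X_I)=\mathcal{C}(P_{W'}+P_{W''})$, the sum rate. Since the orthogonal relay observation is $Y_R=h_{SR}X_{SR}+Z_R$ and does not involve $X_R$, the link term is $I(X_{SR};Y_R|X_R)=\mathcal{C}(|h_{SR}|^2(1-\gamma)P_S)$, yielding $(\mathcal{C}(|h_{SR}|^2(1-\gamma)P_S)-R_I)^+$. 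Inserting these three evaluations into~(\ref{(2,D,U,DM)}) and maximizing over $(\rho_{W'},\rho_{W''},\gamma)$ reproduces~(\ref{(2,D,U,GC)}).

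The step I expect to be the main obstacle is the MU-DPC evaluation itself: one must verify that valid inflation factors $\alpha_R,\alpha_S$ exist for the nested auxiliaries $U_R,U_S$ that simultaneously cancel $h_I X_I$ and achieve the interference-free rates $\mathcal{C}(P_{W''})$ and $\mathcal{C}(P_{W'}+P_{W''})$. This is the multi-user counterpart of Costa's computation; following the MU-DPC analysis of~\cite{thesis} and the generalized-DPC evaluation of~\cite{1}, the optimal $\alpha$'s are the MMSE coefficients of the respective effective channels and the state cost cancels exactly. The remaining details---coherent superposition of the $W'$ signals and consistency of all power budgets with the factorization constraint---are routine by construction.
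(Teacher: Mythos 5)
Your proposal matches the paper's own proof essentially step for step: both specialize Proposition~\ref{prop(2,D,U,DM)} with Gaussian inputs, split $X_{SD}$ into coherent $W'$ and independent $W''$ components with the relay signal $X_R=\sqrt{P_R}\,U_{W'}$ aligned to yield $P_{W'}=(|h_{RD}|\sqrt{P_R}+|h_{SD}||\rho_{W'}|\sqrt{\gamma P_S})^2$, use the $S$--$R$ link to convey $W'$ plus the interference index at cost $R_I$, and invoke MU-DPC with linear assignments $U_S=X_{SD}+\alpha_S X_I$, $U_R=X_R+\alpha_R X_I$ whose optimized inflation factors cancel the interference and reduce the three terms to $\mathcal{C}(P_{W''})$, $(\mathcal{C}(|h_{SR}|^2(1-\gamma)P_S)-R_I)^+$, and $\mathcal{C}(P_{W'}+P_{W''})$. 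The only difference is presentational (you flag the existence of interference-nulling inflation factors as the step needing verification, which the paper likewise delegates to \cite{thesis} and \cite{Costa}), so the argument is correct and follows the same route.
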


\textit{Sketch of the proof}: The result is obtained from (\ref{(2,D,U,DM)}),
where all inputs are chosen according to Gaussian distribution. Specifically,
$X_{SD}$ is allocated power $\gamma P_{S}$, $0\leq\gamma
\leq1$, and the remaining power $(1-\gamma)P_{S}$ is allocated to $X_{SR}$. We set $X_{SD}=\rho_{W^{\prime}}\sqrt{\gamma P_S} U_{W^{\prime }}+\rho_{W^{\prime \prime}}\sqrt{\gamma P_S} U_{W^{\prime \prime}}$ and $X_{R}=\sqrt{P_R} U_{W^{\prime}}$ where $U_{W^{\prime}}$ and $U_{W^{\prime \prime}}$ are independent, zero mean, unit variance, complex Gaussian random variables and carrying the messages $W^{\prime}$ and $W^{\prime \prime}$, respectively. Furthermore, $U_{W^{\prime}}$ and $U_{W^{\prime \prime}}$ are independent of $X_I$. The source conveys
$W^{\prime}$ to the relay at rate $R^{\prime}\leq(\mathcal{C}\left(
|h_{SR}|^{2}(1-\gamma)P_{S}\right)  -R_{I})^{+}$ and the interference at rate $R_I$. MU-DPC is used by the source and the relay for
transmission to the destination, where the precoding is done via $U_{S}$
and $U_{R}$ in (\ref{(2,D,U,DM)}) which are chosen to be linear combinations of
($X_{SD}$,$X_{I}$) and ($X_{R}$,$X_{I}$) as $U_{S}=X_{SD}+\alpha_{S}X_{I}$ and
$U_{R}=X_{R}+\alpha_{R}X_{I}$ with inflation factors $\alpha_{S}$ and
$\alpha_{R}$ and $(X_{SD}$,$X_{R})$ jointly complex Gaussian and independent of
$X_{I}$. When the inflation factors are optimized the
effect of the interference is completely eliminated at the destination similar
to \cite{thesis}, leading to (\ref{(2,D,U,GC)}). We refer the readers to \cite{Costa} and \cite{thesis} for details on DPC and MU-DPC. \hfill$\Box$

\begin{rem}
It is shown that the interference-free capacity region can be achieved by
MU-DPC in \cite{9} for Gaussian relay channel when the interference is
non-causally available at both the source and the relay. Apart from the fact
that we consider a relay channel with orthogonal components, the main
difference with \cite{9} is that the relay does not know the interference a
priori but is informed about the interference through the orthogonal
source-relay link. Note that the structure of the interference is essential in
Proposition \ref{prop(2,U,D,GC)} in conveying the interference signal to the
relay. However, this structure is not used in interference mitigation at the destination.
\end{rem}

\begin{rem}
Once can also consider a scheme (D,S) in which the interference is digitally transmitted to the relay and the structured approach for decoding at the destination is
used. Scheme (D,S) may lead to performance improvements over Scheme (D,U) for a DMC. However, for AWGN channels, Scheme (D,S) is inferior to Scheme
(D,U), since Scheme (D,U) is able to completely remove the effect
of interference at the destination via MU-DPC. We will observe in Sec.
\ref{ergodic fading} and Sec. \ref{Sec_num_ef} that, however, for fading channels MU-DPC typically fails to eliminate the effect of the interference at the destination completely and Scheme (D,S) may outperform Scheme (D,U).
\end{rem}

\subsubsection{Scheme (C,U) (Compressed interference sharing, Unstructured
approach)}

With this scheme, studied in \cite{Zaidi_new} and \cite{Zaidi_orthogonal} for the general relay channel and relay channel with orthogonal components
respectively, the source sends the compressed interference signal and the part of the message to the relay and the unstructured
approach is utilized for decoding at the destination. Achievable rate for Scheme (C,U) for our DM model can be obtained from [1, Corollary 1]. It can be extended to
Gaussian case by using an approach similar to [1, Theorem 6] and taking the
complex channel gains into account. The achievable rate for (C,U) for the AWGN model (\ref{gaussian_model}) can be written as
\vspace{-0.5cm}
\begin{align} \label{(C,U,GC)}
R_{(C,U)}=&\displaystyle\max_{\substack{r_{q},\rho_{W^{\prime}},\rho
_{W^{\prime\prime}},\rho_{W_{I}},\gamma:\\|\rho_{W^{\prime}}|,|\rho_{W^{\prime\prime}}|,|\rho_{W_{I}}|,\gamma\in
\lbrack0,1]  }}   \min\left\{(\mathcal{C}\left(
|h_{SR}|^{2}(1-\gamma)P_{S}\right)  -r_{q})^{+}, \mathcal{C}\left(P_{W^{\prime}}\right)\right\} + \mathcal{C}\left(P_{W^{\prime \prime}}\right)\\
&\hbox{ subject to }   0\leq r_{q}\leq\mathcal{C}\left(  |h_{SR}|^{2}%
(1-\gamma)P_{S}\right) \hbox{ and }  |\rho_{W^{\prime}}|^{2}+|\rho_{W^{\prime\prime}}|^{2}+|\rho_{W_{I}}%
|^{2}\leq1,\nonumber
\end{align}
\vspace{-1.2cm} \newline
where $P_{W^{\prime}}=(|h_{RD}|\sqrt{P_{R}}%
+|h_{SD}||\rho_{W^{\prime}}|\sqrt{\gamma P_{S}})^{2}/ \left(1+\xi^2 D+P_{W^{\prime\prime}}\right)$, $P_{W^{\prime\prime}}=|h_{SD}|^{2}|\rho_{W^{\prime\prime}}|^{2} \gamma P_{S}$, $D=P_I2^{-r_q}$ and $\xi=|h_{I}|-|h_{SD}||\rho_{W_I}|\sqrt{\gamma P_S /P_I}$.

\begin{rem}\label{rem_CU_special_case}
When $r_q=0$ in (\ref{(C,U,GC)}), (C,U) boils down to the special case in which the relay is utilized only for source message cooperation and the source mitigates the interference by itself.
\end{rem}
\subsubsection{Scheme (C,S) (Compressed interference sharing, Structured
approach)}
We propose two schemes in the class (C,S). For both schemes, the source informs the relay using compressed interference information, and the structured approach is used to mitigate interference at the destination. The schemes differ in the way the compressed interference information is used at the source, relay and destination nodes.
In the first scheme, referred to as (C,S,1), the compressed interference information is used only to improve the reception of the interference signal at the destination by forwarding an ``analog'' version of the compressed interference. In the second scheme, referred to as (C,S,2), the compressed interference information is re-encoded by source and relay and decoded at the destination in a similar way as for standard compress-and-forward protocols for the relay channel (See, e.g., \cite{ElGamal2}).

\begin{prop}
\label{prop(2,C,S,DM)1} For Scheme (C,S,1), the following rate is achievable for
the DM model:
\vspace{-0.3cm}
\begin{equation}
R_{(C,S,1)}=\max\min\left\{
\begin{array}
[c]{l}%
I(V;Y_{D}|UX_{I})+(I(X_{SR};Y_{R}|X_{R})-I(X_{I};\hat{X}_{I}))^{+},\\
(I(VX_{I};Y_{D}|U)-R_I)^+ + (I(X_{SR};Y_{R}|X_{R})-I(X_{I};\hat{X}_{I}))^{+},\\
I(VU;Y_{D}|X_{I}),\\
(I(VUX_{I};Y_{D})-R_{I})^{+}%
\end{array}
\right.  ,\label{(2,C,S,DM)_1}%
\end{equation} \vspace{-0.8cm}\newline
where the maximum is over all input pmfs $P_{UV\hat{X}_{I}X_{R} X_{SR}%
X_{SD}|X_{I}}$ of the form $P_{\hat{X}_{I}|X_{I}}P_{X_{SR}|X_{R}}P_{U}P_{X_{R}%
|U\hat{X}_{I}}$\newline $P_{V|UX_{I}}P_{X_{SD}|V\hat{X}_{I}}$.
\end{prop}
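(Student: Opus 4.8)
\textit{Proof plan}: The plan is to prove achievability in the two phases dictated by the scheme---first the source-to-relay constraint, then the destination constraints---and finally to combine them by optimizing the rate split $R=R'+R''$. In the \emph{source-to-relay phase} the source compresses the known interference $X_I^n$ into $\hat X_I^n$ using a rate-distortion (covering) codebook drawn i.i.d.\ $\sim P_{\hat X_I}$ with test channel $P_{\hat X_I|X_I}$; by the covering lemma a jointly typical $\hat X_I^n$ exists w.h.p.\ once the codebook rate exceeds $I(X_I;\hat X_I)$. The source then sends both this quantization index and the sub-message $W'$ over the orthogonal source-relay link. Since the relay knows its own $X_R^n$, reliable recovery holds provided $R'+I(X_I;\hat X_I)\le I(X_{SR};Y_R|X_R)$, i.e.\ $R'\le (I(X_{SR};Y_R|X_R)-I(X_I;\hat X_I))^+$, the nonnegative part reflecting that the relay simply carries no message when the link cannot even support the compressed description. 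At the end of this phase the relay holds $W'$ and the same $\hat X_I^n$ as the source.

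Next I would design the \emph{joint transmission to the destination}. Generate cloud centers $U^n(w')\sim\prod P_U$; the relay forms $X_R^n$ from $(U^n(w'),\hat X_I^n)$ through $P_{X_R|U\hat X_I}$, which realizes the ``analog'' forwarding of the compressed interference together with the cooperative message. The crucial construction is the satellite $V$: because the source knows the interferer index $W_I$, hence the codeword $X_I^n(W_I)$, it generates the codebook $V^n(w',w'',w_I)\sim\prod_i P_{V|UX_I}(\cdot\,|\,U_i(w'),X_{I,i}(w_I))$, so that the transmitted $V^n$ is correlated with the true interference \emph{without} Gelfand--Pinsker binning; the price is that the destination must resolve $W_I$. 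The source then forms $X_{SD}^n$ from $(V^n,\hat X_I^n)$ via $P_{X_{SD}|V\hat X_I}$. I would invoke the Markov lemma to ensure the transmitted tuple $(U^n,V^n,X_I^n,\hat X_I^n,X_R^n,X_{SD}^n,Y_D^n)$ is jointly typical under $P_{X_I}P_{\hat X_I|X_I}P_UP_{X_R|U\hat X_I}P_{V|UX_I}P_{X_{SD}|V\hat X_I}P_{Y_D|X_{SD}X_RX_I}$.

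The destination jointly decodes $(W',W'',W_I)$ by searching for a triple whose $(U^n,V^n,X_I^n)$ is jointly typical with $Y_D^n$ under the marginal $P_{UVX_IY_D}$ induced by the law above. I would bound the competing (wrong) triples, organizing the error events by which of $W',W'',W_I$ differ from the truth and recalling that $V$ is a satellite of $U$ built on $X_I$, so a wrong $w'$ or wrong $w_I$ also refreshes $V$. Requiring only $(W',W'')$ to be correct, the surviving dominant events yield $R''\le I(V;Y_D|UX_I)$, $R''+R_I\le I(VX_I;Y_D|U)$, $R'+R''\le I(UV;Y_D|X_I)$ and $R'+R''+R_I\le I(UVX_I;Y_D)$, the events constraining $R'$ alone coinciding with or being dominated by the sum-rate events. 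Combining these with the phase-one bound $R'\le(I(X_{SR};Y_R|X_R)-I(X_I;\hat X_I))^+=:A$ and maximizing $R=R'+R''$ gives $\min$ of $A+I(V;Y_D|UX_I)$, $A+(I(VX_I;Y_D|U)-R_I)^+$, $I(UV;Y_D|X_I)$ and $(I(UVX_I;Y_D)-R_I)^+$, which is exactly (\ref{(2,C,S,DM)_1}); the positive parts follow from the nonnegativity of the component rates.

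I expect the main obstacle to be the careful bookkeeping of the interference codeword at the destination. The transmitted signals carry the compressed $\hat X_I^n$, which is tied to the \emph{true} interference $X_I^n(W_I)$ that is simultaneously being decoded, so one must verify that testing a wrong $w_I$ genuinely breaks joint typicality and that the correlations among $V$, $X_I$ and $\hat X_I$ produce precisely the conditional mutual information terms above rather than spurious penalties. Establishing that the satellite-on-$X_I$ construction together with the marginalization over $\hat X_I$ yields the stated region, and confirming the redundancy of the $R'$-only events, is the delicate part; the remaining steps are standard random-coding and joint-typicality arguments as in \cite{thesis}\cite{1}\cite{ElGamal2}.
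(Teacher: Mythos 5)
Your proposal is correct and follows essentially the same route as the paper's proof: quantize $X_I^n$ at rate $I(X_I;\hat X_I)$ and send the index together with $W'$ over the orthogonal link (yielding the constraint $R'+I(X_I;\hat X_I)\le I(X_{SR};Y_R|X_R)$), then view the channel to the destination as a MAC with common messages where the source holds $(W',W'',W_I)$ and the relay holds $W'$, using auxiliary codewords $U^n,V^n$ mapped together with the shared $\hat X_I^n$ into $X_R^n,X_{SD}^n$, and finally intersect the resulting four-inequality region with the source-to-relay bound. The only difference is cosmetic: the paper cites known MAC-with-common-message results for the destination region, whereas you re-derive it via explicit joint-typicality error analysis.
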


\textit{Sketch of the proof}: The source quantizes the interference signal
$X_{I}^{n}$ into a reconstruction sequence $\hat{X}_{I}^{n}$ at rate
$I(X_{I};\hat{X}_{I})$ using some test channel $P_{\hat{X}_{I}|X_{I}}$ and
sends the index of the quantized interference and $W^{\prime}$ to the
relay. The relay recovers $\hat{X}_{I}^{n}$ and $W^{\prime}$ successfully if
$R^{\prime}+I(X_{I};\hat{X}_{I})\leq I(X_{SR};Y_{R}|X_{R})$. As a result of the source-to-relay communication, the channel to the destination can be seen as a MAC with common messages in which the source and the relay have the message sets $(W^{\prime},W^{\prime\prime},W_I)$ and $(W^{\prime})$, respectively. The source and relay can thus employ a code in which the source codeword $V^n$ depends on messages $(W^{\prime},W^{\prime\prime},W_I)$ and the relay codeword $U^n$ depends on message $W^{\prime}$. The reason for using auxiliary codebooks instead of the actual transmitted signals $X_{SD}^n$ and $X_R^n$ is because unlike the corresponding conventional model, here the source and the relay also share the compressed interference information $\hat{X}_I^n$. In the scheme (C,S,1) at hand, this information is forwarded in an ``analog'' fashion to the receiver. This is accomplished by mapping the codewords $V^n$ and $U^n$, obtained as discussed above, and the compressed state information $\hat{X}_I^n$, into the transmitted signals $X_{SD}^n$ and $X_R^n$, respectively. Following the results for MAC with common messages \cite{6} \cite{wolf} \cite{deniz}, an achievable rate region is obtained as
\vspace{-0.5cm}
\begin{subequations}
\label{CS_MAC}%
\begin{align}
R^{\prime\prime} &  \leq I(V;Y_{D}|UX_{I})\\
R^{\prime\prime}+R_{I} &  \leq I(VX_{I};Y_{D}|U) \\
R^{\prime\prime}+R^{\prime} &  \leq I(VU;Y_{D}|X_{I}) \\
R^{\prime\prime}+R^{\prime}+R_{I} &  \leq I(VUX_{I};Y_{D}),
\end{align}
\vspace{-1.3cm} \newline
for some input pmf $P_{UV X_{R}X_{SD}|X_{I}\hat{X}_{I}}= P_{U} P_{X_{R}|U\hat{X}_{I} }P_{V|UX_{I}}P_{X_{SD}|V\hat{X}_{I}}$. Incorporating the constraint on $R^{\prime}$ above into
(\ref{CS_MAC}) gives us (\ref{(2,C,S,DM)_1}). \hfill$\Box$
\end{subequations}
\begin{prop}
For Scheme (C,S,1), the following rate is achievable for the AWGN model
(\ref{gaussian_model}):
\begin{align}
R_{(C,S,1)}=\displaystyle\max_{\substack{r_{q},\rho_{W^{\prime}},\rho
_{W^{\prime\prime}},\\\rho_{W_{I}},\bar{\rho}_{W^{\prime}},\bar{\rho}_{\hat
{W}_{I}},\gamma:\\|\rho_{W^{\prime}}|,|\rho_{W^{\prime\prime}}|,|\rho_{W_{I}%
}|,\\|\bar{\rho}_{W^{\prime}}||\bar{\rho}_{W_{I}}|,\gamma\in
\lbrack0,1]}} &  \min\left\{
\begin{array}
[c]{ll}%
\mathcal{C}\left(  P_{W^{\prime\prime}}\right)  +(\mathcal{C}\left(
|h_{SR}|^{2}(1-\gamma)P_{S}\right)  -r_{q})^{+}, & \\
(\mathcal{C}\left(P_{W^{\prime\prime}}+P_{W_{I}%
} \right)-R_I)^+ +(\mathcal{C}\left(
|h_{SR}|^{2}(1-\gamma)P_{S}\right)  -r_{q})^{+}, & \\
\mathcal{C}\left(  P_{W^{\prime\prime}}+P_{W^{\prime}}\right)  , & \\
(\mathcal{C}\left(  P_{W^{\prime\prime}}+P_{W^{\prime}}+P_{W_{I}%
}\right)  -R_{I})^{+} &
\end{array}
\right.   \nonumber\\
& \hbox{ subject to }   0\leq r_{q}\leq\mathcal{C}\left(  |h_{SR}|^{2}%
(1-\gamma)P_{S}\right),  \label{(2,C,S,GC)} \\
&  |\rho_{W^{\prime}}|^{2}+|\rho_{W^{\prime\prime}}|^{2}+|\rho_{W_{I}}%
|^{2}\leq1 \hbox{ and  }  |\bar{\rho}_{W^{\prime}}|^{2}+|\bar{\rho}_{W_{I}}|^{2}\leq1 \nonumber
\end{align}
\vspace{-1.2cm}\newline
where $P_{W^{\prime}}=(|h_{RD}||\bar{\rho}_{W^{\prime}}|\sqrt{P_{R}}%
+|h_{SD}||\rho_{W^{\prime}}|\sqrt{\gamma P_{S}})^{2}/N_{eq}$, $P_{W^{\prime
\prime}}=|h_{SD}|^{2}|\rho_{W^{\prime\prime}}|^{2}\gamma P_{S}/N_{eq},P_{W_{I}}=(|h_{SD}||\rho_{W_{I}}|\sqrt{\gamma P_{S}}+|h_{RD}||\bar{\rho }_{W_I}|\sqrt{P_{R}(1-2^{-r_q})}+|h_{I}|\sqrt{P_{I}})^{2}/N_{eq}$ and \normalsize{$N_{eq}=|h_{RD}|^2|\bar{\rho }_{W_I}|^2 P_R 2^{-rq}+1$}.
\end{prop}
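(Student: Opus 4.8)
\textit{Proof idea}: The plan is to evaluate the discrete-memoryless region of Proposition~\ref{prop(2,C,S,DM)1}, i.e. the MAC-with-common-message bounds (\ref{CS_MAC}) together with the source--relay constraint $R^{\prime}+I(X_{I};\hat{X}_{I})\leq I(X_{SR};Y_{R}|X_{R})$, under a jointly Gaussian choice of all the random variables, and to show that it collapses to (\ref{(2,C,S,GC)}). I would first fix the quantizer as the rate-distortion optimal Gaussian reconstruction, described by the backward test channel $X_{I}=\hat{X}_{I}+E$ with $E$ independent of $\hat{X}_{I}$ and $\mathbb{E}[|E|^{2}]=D$. This yields $I(X_{I};\hat{X}_{I})=\log_{2}(P_{I}/D)=r_{q}$, hence $D=P_{I}2^{-r_{q}}$ and $\mathbb{E}[|\hat{X}_{I}|^{2}]=P_{I}(1-2^{-r_{q}})$. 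After allotting power $(1-\gamma)P_{S}$ to $X_{SR}$ and $\gamma P_{S}$ to $X_{SD}$, the orthogonal relay-reception link gives $I(X_{SR};Y_{R}|X_{R})=\mathcal{C}(|h_{SR}|^{2}(1-\gamma)P_{S})$, so the source--relay constraint becomes $R^{\prime}\leq(\mathcal{C}(|h_{SR}|^{2}(1-\gamma)P_{S})-r_{q})^{+}$ with $0\leq r_{q}\leq\mathcal{C}(|h_{SR}|^{2}(1-\gamma)P_{S})$.

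Next I would specify the Gaussian signaling consistent with the factorization $P_{U}P_{X_{R}|U\hat{X}_{I}}P_{V|UX_{I}}P_{X_{SD}|V\hat{X}_{I}}$. Let $U_{W^{\prime}},U_{W^{\prime\prime}}$ be independent unit-variance complex Gaussians carrying $W^{\prime},W^{\prime\prime}$ and independent of $X_{I}$, and let $\tilde{X}_{I}=\hat{X}_{I}/\sqrt{P_{I}(1-2^{-r_{q}})}$ be the unit-variance normalized reconstruction. I would set $U=U_{W^{\prime}}$, let the source signal be $X_{SD}=V=\rho_{W^{\prime}}\sqrt{\gamma P_{S}}\,U_{W^{\prime}}+\rho_{W^{\prime\prime}}\sqrt{\gamma P_{S}}\,U_{W^{\prime\prime}}+\rho_{W_{I}}\sqrt{\gamma P_{S}/P_{I}}\,X_{I}$ (the source knows $X_{I}$ exactly and forwards it coherently), and let the relay signal be $X_{R}=\bar{\rho}_{W^{\prime}}\sqrt{P_{R}}\,U_{W^{\prime}}+\bar{\rho}_{W_{I}}\sqrt{P_{R}}\,\tilde{X}_{I}$ (the relay forwards only the reconstruction it received). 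The power budgets translate into $|\rho_{W^{\prime}}|^{2}+|\rho_{W^{\prime\prime}}|^{2}+|\rho_{W_{I}}|^{2}\leq1$ and $|\bar{\rho}_{W^{\prime}}|^{2}+|\bar{\rho}_{W_{I}}|^{2}\leq1$. The key structural observation is that, since the analog forwarding $\hat{X}_{I}$ is never decoded at the destination in scheme (C,S,1), only the component of $\tilde{X}_{I}$ aligned with $X_{I}$ is useful: writing $\tilde{X}_{I}=\sqrt{1-2^{-r_{q}}}\,(X_{I}/\sqrt{P_{I}})+\sqrt{2^{-r_{q}}}\,N$ with $N$ unit-variance and orthogonal to $X_{I}$, the orthogonal part $\sqrt{2^{-r_{q}}}\,N$ is irreducible Gaussian noise at the destination, producing exactly the effective noise $N_{eq}=|h_{RD}|^{2}|\bar{\rho}_{W_{I}}|^{2}P_{R}2^{-r_{q}}+1$.

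Finally I would evaluate the four bounds in (\ref{CS_MAC}) under this choice, selecting the phases of the correlation coefficients so that the contributions to each message add coherently at the destination. The $W^{\prime}$ signal combines into amplitude $|h_{RD}||\bar{\rho}_{W^{\prime}}|\sqrt{P_{R}}+|h_{SD}||\rho_{W^{\prime}}|\sqrt{\gamma P_{S}}$, the $W^{\prime\prime}$ signal is carried by the source alone, and the $X_{I}$-aligned signal adds the direct interference $|h_{I}|\sqrt{P_{I}}$, the source forwarding $|h_{SD}||\rho_{W_{I}}|\sqrt{\gamma P_{S}}$ and the coherent part of the relay forwarding $|h_{RD}||\bar{\rho}_{W_{I}}|\sqrt{P_{R}(1-2^{-r_{q}})}$; dividing the corresponding powers by $N_{eq}$ produces $P_{W^{\prime}}$, $P_{W^{\prime\prime}}$ and $P_{W_{I}}$ as defined in the statement. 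One then gets $I(V;Y_{D}|UX_{I})=\mathcal{C}(P_{W^{\prime\prime}})$, $I(VX_{I};Y_{D}|U)=\mathcal{C}(P_{W^{\prime\prime}}+P_{W_{I}})$, $I(VU;Y_{D}|X_{I})=\mathcal{C}(P_{W^{\prime\prime}}+P_{W^{\prime}})$ and $I(VUX_{I};Y_{D})=\mathcal{C}(P_{W^{\prime\prime}}+P_{W^{\prime}}+P_{W_{I}})$. Substituting these evaluations, together with $I(X_{SR};Y_{R}|X_{R})=\mathcal{C}(|h_{SR}|^{2}(1-\gamma)P_{S})$ and $I(X_{I};\hat{X}_{I})=r_{q}$, directly into (\ref{(2,C,S,DM)_1}) gives (\ref{(2,C,S,GC)}). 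I expect the main obstacle to be the effective-noise bookkeeping just described: correctly splitting the analog-forwarded reconstruction into its coherent and incoherent parts, recognizing that the incoherent part remains as noise because $\hat{X}_{I}$ is not decoded, and verifying that the phase alignment required for coherent combining is simultaneously feasible for the $W^{\prime}$ and the $X_{I}$ signal groups given the independently chosen complex coefficients.\hfill$\Box$
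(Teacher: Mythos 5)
Your proposal is correct and takes essentially the same route as the paper: quantize $X_I$ via the Gaussian backward test channel at rate $r_q$, use the identical Gaussian signaling (your relay term $\bar{\rho}_{W_I}\sqrt{P_R}\,\tilde{X}_I$ is exactly the paper's $k\hat{X}_I$ with $k=|\bar{\rho}_{W_I}|\sqrt{P_R}/\sqrt{\rho P_I}$, and your $\rho_{W_I}\sqrt{\gamma P_S/P_I}\,X_I$ is the paper's coherent $U_{W_I}$ term with $\mathbb{E}[U_{W_I}X_I]=\sqrt{P_I}$), and then evaluate the four MAC-with-common-message bounds of Proposition~\ref{prop(2,C,S,DM)1} together with the source-relay constraint. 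Your explicit coherent/incoherent decomposition of $\tilde{X}_I$ yielding $N_{eq}=|h_{RD}|^2|\bar{\rho}_{W_I}|^2P_R2^{-r_q}+1$ is precisely the effective-noise accounting left implicit in the paper's sketch.
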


\textit{Sketch of the proof}: The source quantizes the interference signal
$X_{I}$ with rate $r_{q}$ using a quantization codebook with rate $I(X_I;\hat{X}_I)$. The quantization codebook is characterized by the reverse test channel $X_I=\hat{X}_I+Q$, with $Q$ being a zero-mean complex Gaussian variable with variance $P_I2^{-r_q}$, independent of $X_I$, or equivalently by the test channel $\hat{X}_I=\rho X_I+Q'$, with $\rho=1-2^{-r_q}$ and $Q^{\prime}$ being a complex Gaussian random variable with zero mean and variance $P_I2^{-r_q}(1-2^{-r_q})$, independent of $X_I$. The source inputs $X_{SD}$ and $X_{SR}$ are allocated power $\gamma P_{S}$ and
$(1-\gamma)P_{S}$, respectively where $0\leq\gamma\leq1$. We assume $X_{SD}=V$ so that the source does not forward the quantized interference $\hat{X}_I$. 
 We set $X_{SD}=\rho_{W^{\prime}}\sqrt{\gamma P_S} U_{W^{\prime }}+\rho_{W^{\prime \prime}}\sqrt{\gamma P_S} U_{W^{\prime \prime}}+\rho_{W_{I}}\sqrt{\gamma P_S} U_{W_I}$, $X_{R}=\bar{\rho}_{W^{\prime}}\sqrt{P_R} U_{W^{\prime}} + k\hat{X}_I$ and $U= U_{W^{\prime}} $, $k=\frac{|\bar{\rho }_{W_I}|\sqrt{P_{R}}}{\sqrt{\rho P_I}}$ where $U_{W^{\prime}}$, $U_{W^{\prime \prime}}$, $U_{W_I}$ are independent, zero mean, unit variance, complex Gaussian random variables and carry the messages $W^{\prime}$, $W^{\prime \prime}$ and $W_I$, respectively. Furthermore, $U_{W^{\prime}}$ and $U_{W^{\prime \prime}}$ are independent of $X_I$ and $\hat{X}_I$ whereas $\mathbb{E}[U_{W_I}X_I ]=\sqrt{P_I}$. The destination decodes messages $W^{\prime}$, $W^{\prime\prime}$ and
$W_{I}$ jointly. \hfill$\Box$


\begin{rem}
Similar to Remark \ref{rem_CU_special_case}, when we set $r_q=0$ in (\ref{(2,C,S,GC)}), Scheme (C,S,1) boils down to the special case in which the source mitigates the interference without the help of the relay using the structured approach and the relay is used for only source message cooperation.
\end{rem}


Now, we turn to scheme (C,S,2).

\begin{prop}
\label{prop(2,C,S,DM)2} For Scheme (C,S,2), the following rate is achievable for
the DM model:
\vspace{-0.4cm}
\begin{equation}
R_{(C,S,2)}=\max\min\left\{
\begin{array}
[c]{l}%
I(X_{SD};Y_{D}\hat{X}_I|X_R X_{I}U)+(I(X_{SR};Y_{R}|X_{R})-I(X_I;\hat{X}_I|UY_D))^{+},\\
(I(X_{SD}X_{I};Y_{D}\hat{X}_I|X_R U)-R_I)^+ + (I(X_{SR};Y_{R}|X_{R})-I(X_I;\hat{X}_I|UY_D))^{+},\\
I(X_{SD}X_{R};Y_{D}\hat{X}_I|X_{I} U),\\
(I(X_{SD}X_{R}X_{I};Y_{D}\hat{X}_I|U)-R_{I})^{+}%
\end{array}
\right.  ,\label{(2,C,S,DM)_2}%
\end{equation}
\vspace{-1.2cm} \newline
where the maximum is over all input pmfs $P_{U\hat{X}_{I}X_{R} X_{SR}%
X_{SD}|X_{I}}$ of the form $P_{\hat{X}_{I}|X_{I}} P_{X_{SR}|X_{R}\hat{X}_I}P_{U} P_{X_{R}%
|U}$ \newline $P_{X_{SD}|UX_{R} X_{I}}$ such that the inequality $ I(U;Y_D) \geq I(X_I;\hat{X}_I|UY_D)$ holds.
\end{prop}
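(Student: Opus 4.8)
The plan is to realise Scheme (C,S,2) as a compress-and-forward (CF) protocol in which it is the interferer's sequence $X_I^n$ (known non-causally at the source) that is compressed and then reconstructed at the destination, layered on top of a partial-decode-and-forward code for $W'$. I would first have the source draw a Wyner--Ziv quantization codebook of $2^{nI(X_I;\hat{X}_I)}$ sequences $\hat{X}_I^n$ i.i.d. from the marginal $P_{\hat{X}_I}$ of the test channel $P_{\hat{X}_I|X_I}$, and distribute them into $2^{nR_b}$ bins. Because the destination will reconstruct $\hat{X}_I^n$ using $(U^n,Y_D^n)$ as side information, the within-bin counting argument (which is legitimate because the factorisation $P_{\hat{X}_I|X_I}P_{X_{SR}|X_R\hat{X}_I}P_UP_{X_R|U}P_{X_{SD}|UX_RX_I}$ makes $\hat{X}_I-X_I-(U,Y_D)$ a Markov chain) fixes the bin rate at $R_b=I(X_I;\hat{X}_I|UY_D)$; this is precisely the conditional term that appears subtracted in~(\ref{(2,C,S,DM)_2}). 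On top of this I would place the superposition structure dictated by the factorisation: a cloud centre $U^n(s)$ indexed by the bin index $s$, a relay codeword $X_R^n(s,w')$ generated from $\prod P_{X_R|U}$ and carrying the PDF common message $W'$, and a source direct codeword $X_{SD}^n$ drawn from $\prod P_{X_{SD}|UX_RX_I}$ that carries $W''$ and is superimposed on $U^n$, $X_R^n$ and the known interferer codeword $X_I^n(w_I)$; the orthogonal-link codebook $X_{SR}^n$ follows $P_{X_{SR}|X_R\hat{X}_I}$.

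For encoding and relaying I would have the source quantize its own $X_I^n$, read off the bin index $s$, and send the pair $(W',s)$ over the orthogonal $S$--$R$ channel $P_{Y_R|X_{SR}X_R}$. Standard channel-coding analysis makes the relay recover $(W',s)$ reliably as long as $R'+R_b\le I(X_{SR};Y_R|X_R)$, i.e. $R'\le(I(X_{SR};Y_R|X_R)-I(X_I;\hat{X}_I|UY_D))^+$, which is exactly the additive $S$--$R$ term in the first two bounds of~(\ref{(2,C,S,DM)_2}). After this hop both the source and the relay know $(W',s)$, so the remaining channel to the destination is a state-dependent MAC with common message, just as in Scheme (C,S,1); the only structural difference is that here the compressed interference is to be \emph{decoded} at the destination rather than forwarded in analog form.

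At the destination I would decode sequentially. In the CF stage the receiver first decodes the cloud centre $U^n(s)$ (hence the bin index $s$) treating the superimposed and interference signals as noise, which succeeds provided $R_b\le I(U;Y_D)$; it then searches bin $s$ for the unique $\hat{X}_I^n$ jointly typical with $(U^n,Y_D^n)$, which succeeds because the number of quantization codewords compatible with $(U^n,Y_D^n)$ is $2^{nI(X_I;\hat{X}_I|UY_D)}$ and $R_b=I(X_I;\hat{X}_I|UY_D)$. Combining the two requirements yields exactly the feasibility condition $I(U;Y_D)\ge I(X_I;\hat{X}_I|UY_D)$ imposed in the proposition. In the message stage the receiver now has the \emph{enhanced} output $(Y_D^n,\hat{X}_I^n)$ and, conditioned on the decoded $U^n$, performs joint-typicality decoding of $(W',W'',W_I)$ over the MAC with common message. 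Repeating the argument of~(\ref{CS_MAC}) and \cite{6}\cite{wolf}\cite{deniz} with $Y_D$ replaced everywhere by $(Y_D,\hat{X}_I)$ and conditioned on $U$ produces the four inequalities on the right-hand side of~(\ref{(2,C,S,DM)_2}) prior to the addition of the $S$--$R$ term. I would finish by inserting the $S$--$R$ constraint on $R'$ into these bounds and performing a Fourier--Motzkin elimination of $R'$ and $R''$ (with $R=R'+R''$), which collapses the system into the stated four-way minimum.

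The step I expect to be the main obstacle is the joint error analysis that ties the two destination stages together: one must check that reconstructing $\hat{X}_I^n$ and decoding the messages can be made simultaneously reliable under the single condition $I(U;Y_D)\ge I(X_I;\hat{X}_I|UY_D)$, rather than a more restrictive constraint coupling $R_b$ with $R'$, and one must verify that conditioning on the shared cloud centre $U$ is consistent across the binning step, the CF recovery step and the MAC step. Establishing the Markov relation $\hat{X}_I-X_I-(U,Y_D)$ under the prescribed factorisation—so that the count of compatible codewords is $2^{nI(X_I;\hat{X}_I|UY_D)}$—is the technical crux on which both the bin rate $R_b$ and the recovery condition rest.
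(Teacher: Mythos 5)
Your proposal is correct and follows essentially the same route as the paper's own proof: Wyner--Ziv quantization of $X_I^n$ with bin rate $I(X_I;\hat{X}_I|UY_D)$, conveying the bin index and $W'$ over the orthogonal $S$--$R$ link, having the relay re-encode the bin index into $U^n$, sequential decoding of $U^n$ (giving the condition $I(U;Y_D)\geq I(X_I;\hat{X}_I|UY_D)$) followed by Wyner--Ziv recovery of $\hat{X}_I^n$, and finally treating the channel as a MAC with common message and enhanced output $(Y_D,\hat{X}_I)$ conditioned on $U$. Your added details (the Markov chain $\hat{X}_I-X_I-(U,Y_D)$ justifying the bin rate, and the Fourier--Motzkin elimination of $R',R''$) are consistent with, and merely flesh out, the paper's sketch.
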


\textit{Sketch of the proof}: The source quantizes the interference signal $X_I^n$ into a reconstruction sequence $\hat{X}_I^n$ by using a test channel $P_{\hat{X}_{I}|X_{I}}$. Moreover, random binning is performed according to the Wyner-Ziv strategy (See, e.g., \cite{ElGamal2}), reducing the rate of the compression codebook to $I(X_I;\hat{X}_I|UY_D)$. The source sends the index of the quantized interference and message $W^{\prime}$ to the relay. The relay recovers the compression index (but not $\hat{X}_I^n$) and $W^{\prime}$ successfully if $R^{\prime}+I(X_I;\hat{X}_I|UY_D)\leq I(X_{SR};Y_{R}|X_{R})$. The relay then maps the index of the quantized interference received from the source into a codeword $U^n$ from an independent codebook and forwards it along with the codeword that encodes message $W^{\prime}$ to the destination. The destination first decodes the codeword $U^n$, which is guaranteed if $I(U;Y_D) \geq I(X_I;\hat{X}_I|UY_D)$. From the compression index, the destination can now recover $\hat{X}_I^n$ via Wyner-Ziv decoding, since it has the side information $Y_D^n$ and $U^n$. The decoded sequence $\hat{X}_I^n$ is then used to facilitate decoding at the destination. The resulting channel to the destination is thus a MAC with common messages as (C,S,1) in which the source and the relay have the message sets ($W^{\prime}$,$W^{\prime\prime}$,$W_{I}$) and
$W^{\prime}$, respectively. Unlike (C,S,1), here the destination has the knowledge of both
$\hat{X}_{I}^{n}$ and $U^n$, which is used to jointly decode the messages set ($W^{\prime}$,$W^{\prime\prime}$,$W_{I}$). Similar to (C,S,1), an achievable rate region is obtained as \vspace{-0.4cm}
\begin{subequations}
\label{CS_MAC2}%
\begin{align}
R^{\prime\prime} &  \leq I(X_{SD};Y_{D}\hat{X}_I|X_R X_{I} U)\\
R^{\prime\prime}+R_{I} &  \leq I(X_{SD} X_{I};Y_{D}\hat{X}_I|X_R  U)\\
R^{\prime\prime}+R^{\prime} &  \leq I(X_{SD}X_R;Y_{D}\hat{X}_I|X_{I} U)\\
R^{\prime\prime}+R^{\prime}+R_{I} &  \leq I(X_{SD} X_R X_{I};Y_{D}\hat{X}_I|U),
\end{align}
\vspace{-1.3cm} \newline
for some input pmf $P_{UX_{R}X_{SD}|X_{I}}= P_{U} P_{X_{R}|U }P_{X_{SD}|UX_R X_{I}}$. Incorporating the constraints on $R^{\prime}$ and $I(U;Y_D)$ above into
(\ref{CS_MAC2}) gives us (\ref{(2,C,S,DM)_2}). \hfill$\Box$
\end{subequations}
\begin{prop}
For Scheme (C,S,2), the following rate is achievable for the AWGN model
(\ref{gaussian_model}):
\begin{align}
R_{(C,S,2)}= &\displaystyle\max_{\substack{r_{q},\rho_{W^{\prime}},\rho
_{W^{\prime\prime}},\\ \rho_{W_{I}}, \rho_{U}, \bar{\rho}_{W^{\prime}},\bar{\rho}_{U},\gamma:\\|\rho_{W^{\prime}}|,|\rho_{W^{\prime\prime}}|,|\rho_{W_{I}%
}|,\\|\rho_{U}|,|\bar{\rho}_{W^{\prime}}|,|\bar{\rho}_{U}|,\gamma\in
\lbrack0,1]}}   \min\left\{
\begin{array}
[c]{ll}%
\mathcal{C}\left(  P_{W^{\prime\prime}}\right)  +(\mathcal{C}\left(
|h_{SR}|^{2}(1-\gamma)P_{S}\right)  -r_{q})^{+}, & \\
(\log_2\left((1+P_{W^{\prime\prime}})\frac{P_I}{D}+P_{W_{I}}\right)-R_I)^+ & \\ +(\mathcal{C}\left(
|h_{SR}|^{2}(1-\gamma)P_{S}\right)  -r_{q})^{+}, & \\
\mathcal{C}\left(  P_{W^{\prime\prime}}+P_{W^{\prime}}\right)  , & \\
(\log_2\left((1+P_{W^{\prime\prime}}+P_{W^{\prime}})\frac{ P_I}{D}+P_{W_{I}}\right)-R_I)^+ &
\end{array}
\right.  \label{(2,C,S,GC)2}\\
&\hbox{ subject to }   0\leq r_{q}\leq \min \left\{
                                              \begin{array}{ll}
                                                \mathcal{C}\left(  |h_{SR}|^{2}%
(1-\gamma)P_{S}\right), & \hbox{} \\
                                                \mathcal{C}\left(\frac{P_U}{P_{W^{\prime}}+P_{W^{\prime\prime}}+P_{W_I}+1}\right)  & \hbox{}
                                              \end{array}
                                            \right.
 \nonumber
\\ &  |\rho_{W^{\prime}}|^{2}+|\rho_{W^{\prime\prime}}|^{2}+|\rho_{W_{I}}%
|^{2} + |\rho_{U}|^{2} \leq1 \hbox{ and } |\bar{\rho}_{W^{\prime}}|^{2}+|\bar{\rho}_{U}|^{2}\leq1\nonumber
\end{align}
\vspace{-1.2cm}\newline
where $P_{W^{\prime}}=(|h_{RD}||\bar{\rho}_{W^{\prime}}|\sqrt{P_{R}}%
+|h_{SD}||\rho_{W^{\prime}}|\sqrt{\gamma P_{S}})^{2}$, $P_{W^{\prime
\prime}}=|h_{SD}|^{2}|\rho_{W^{\prime\prime}}|^{2}\gamma P_{S}$, $P_{W_{I}}=(|h_{SD}||\rho_{W_{I}}|\sqrt{\gamma P_{S}}+|h_{I}|\sqrt{P_{I}})^{2}$, $P_{U}=(|h_{SD}||\rho_{U}|\sqrt{\gamma P_{S}}+|h_{RD}||\bar{\rho}_{U}|\sqrt{P_{R}})^{2}$, $D=P_I 2^{-r_q}\frac{(1-x)}{1-x2^{-r_q}}$ and $x=P_{W_I}/(P_{W^{\prime}}+P_{W^{\prime\prime}}+P_{W_I}+1)$.
\end{prop}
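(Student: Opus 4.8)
The plan is to specialize the discrete-memoryless region underlying Proposition~\ref{prop(2,C,S,DM)2}---namely the MAC-with-common-message constraints (\ref{CS_MAC2}) together with the relay constraint $R'+I(X_I;\hat X_I|UY_D)\le I(X_{SR};Y_R|X_R)$ and the Wyner--Ziv decodability condition $I(U;Y_D)\ge I(X_I;\hat X_I|UY_D)$---to jointly Gaussian inputs, and then evaluate every mutual-information term in closed form. The skeleton mirrors the AWGN proof of (C,S,1), the essential new ingredient being that the compressed description $\hat X_I^n$ is now carried \emph{digitally} by the relay codeword $U^n$ and recovered at the destination by Wyner--Ziv decoding that exploits $Y_D^n$ as side information.

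First I would fix the Gaussian ensemble. Split the source power as $\gamma P_S$ on $X_{SD}$ and $(1-\gamma)P_S$ on $X_{SR}$, and set
\begin{align*}
X_{SD}&=\sqrt{\gamma P_S}\,(\rho_{W'}U_{W'}+\rho_{W''}U_{W''}+\rho_{W_I}U_{W_I}+\rho_U U_U),\\
X_R&=\sqrt{P_R}\,(\bar\rho_{W'}U_{W'}+\bar\rho_U U_U),\qquad U=U_U,
\end{align*}
where $U_{W'},U_{W''},U_U$ are independent, zero-mean, unit-variance complex Gaussians independent of $X_I$, while the component $U_{W_I}$ is taken coherent with the interference, $\mathbb{E}[U_{W_I}X_I^*]=\sqrt{P_I}$, realizing the structured (coherent) forwarding of the system model. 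The power constraints then reduce to $|\rho_{W'}|^2+|\rho_{W''}|^2+|\rho_{W_I}|^2+|\rho_U|^2\le1$ and $|\bar\rho_{W'}|^2+|\bar\rho_U|^2\le1$, matching (\ref{(2,C,S,GC)2}). The quantizer is a Gaussian test channel $\hat X_I=\rho X_I+Q'$ with $Q'$ an independent Gaussian whose parameters are pinned down by the target reconstruction quality.

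Next I would evaluate the relay link and the compression rate. The orthogonal source--relay channel gives $I(X_{SR};Y_R|X_R)=\mathcal{C}(|h_{SR}|^2(1-\gamma)P_S)$ directly, so the relay recovers the compression bin index and $W'$ whenever $R'\le(\mathcal{C}(|h_{SR}|^2(1-\gamma)P_S)-r_q)^+$ with $r_q=I(X_I;\hat X_I|UY_D)$ the binned (Wyner--Ziv) rate. The crux is computing $r_q$ and the resulting distortion. Because $Y_D$ already carries a component correlated with $X_I$---through both the direct link $h_IX_I$ and the coherent forwarding term $\rho_{W_I}U_{W_I}$, whose aggregate received power is $P_{W_I}$---and because $U$ is decoded first and subtracted, the destination's residual uncertainty about $X_I$ given $(U,Y_D)$ is governed by the fraction $x=P_{W_I}/(P_{W'}+P_{W''}+P_{W_I}+1)$ of residual received power attributable to the interference. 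I would compute the conditional MMSE variance of $X_I$ given $(U,Y_D)$, then solve the scalar Gaussian Wyner--Ziv relation to obtain the effective distortion $D=P_I2^{-r_q}\tfrac{1-x}{1-x2^{-r_q}}$, which collapses to the plain $P_I2^{-r_q}$ when $x=0$. Simultaneously, decodability of $U^n$ at the destination enforces $r_q\le I(U;Y_D)=\mathcal{C}(P_U/(P_{W'}+P_{W''}+P_{W_I}+1))$, which together with $r_q\le\mathcal{C}(|h_{SR}|^2(1-\gamma)P_S)$ yields the two-fold bound on $r_q$ in (\ref{(2,C,S,GC)2}). I expect this step---correctly treating $Y_D$ as Wyner--Ziv side information and deriving the closed form for $D$---to be the main obstacle; the rest are routine Gaussian evaluations.

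Finally I would evaluate the four MAC constraints in (\ref{CS_MAC2}) with the recovered $\hat X_I^n$ acting as an extra observation of $X_I$ whose quality is summarized by the factor $P_I/D$. Constraints (a) and (c), which do not decode the interference message, give $R''\le\mathcal{C}(P_{W''})$ and $R''+R'\le\mathcal{C}(P_{W''}+P_{W'})$; appending the relay bound $R'\le(\mathcal{C}(|h_{SR}|^2(1-\gamma)P_S)-r_q)^+$ to the former produces the first entry of the minimum and the latter gives the third. Constraints (b) and (d), which additionally decode $X_I$ and hence subtract $R_I$, pick up the interference-quality factor contributed by $\hat X_I$, yielding the $\log_2((1+P_{W''})P_I/D+P_{W_I})$ and $\log_2((1+P_{W''}+P_{W'})P_I/D+P_{W_I})$ terms; appending the relay bound to the former completes the second entry and the latter gives the fourth. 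Substituting $R=R'+R''$ and taking the minimum over the four bounds, then maximizing over the Gaussian parameters, gives (\ref{(2,C,S,GC)2}).
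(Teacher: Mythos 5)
Your proposal is correct and follows essentially the same route as the paper: the same Gaussian signalling with coherent $U_{W_I}$ and digitally re-encoded compression index $U$, the same Wyner--Ziv argument with $(U,Y_D)$ as side information yielding $D=P_I2^{-r_q}(1-x)/(1-x2^{-r_q})$, the same $U$-decodability constraint, and the same evaluation of the four MAC-with-common-message bounds. You also correctly identify the distortion computation as the one non-routine step, so no gaps to report.
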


\textit{Sketch of the proof}: Similar to (C,S,1), the source quantizes the interference signal
$X_{I}$ with rate after binning, given by $r_{q}=I(X_I;\hat{X}_I|UY_D)$. The quantization codebook is characterized by the reverse test channel $X_I=\hat{X}_I+Q$, with $Q$ being a zero-mean complex Gaussian variable with variance $D$, independent of $X_I$, or equivalently the test channel $\hat{X}_I=\rho X_I+Q'$, with $\rho=1-D/P_I$ and $Q^{\prime}$ being a complex Gaussian random variable with zero mean and variance $D(1-D/P_I)$, independent of $X_I$. We obtain $D=P_I 2^{-r_q}\frac{(1-x)}{1-x2^{-r_q}}$ where $x$ is defined above. The term $\frac{(1-x)}{1-x2^{-r_q}}$ represents the percentage of the decreased distortion due to side information about $X_I$ at the destination. When $x=0$, $D=P_I2^{-r_q}$ which is the case where there is no side information about $X_I$ at the destination. As $x\rightarrow 1$, $D\rightarrow 0$ for any nonzero $r_q$ and the destination can completely recover $X_I$ using the side information. The source inputs $X_{SD}$ and $X_{SR}$ are allocated power $\gamma P_{S}$ and $(1-\gamma)P_{S}$, respectively where $0\leq\gamma\leq1$. 
 We set $X_{SD}=\rho_{W^{\prime}}\sqrt{\gamma P_S} U_{W^{\prime }}+\rho_{W^{\prime \prime}}\sqrt{\gamma P_S} U_{W^{\prime \prime}}+\rho_{W_{I}}\sqrt{\gamma P_S} U_{W_I}+ \rho_{U}\sqrt{\gamma P_S} U$ and $X_{R}=\bar{\rho}_{W^{\prime}}\sqrt{P_R} U_{W^{\prime}} + \bar{\rho}_{U}\sqrt{P_R}U $ where $U_{W^{\prime}}$, $U_{W^{\prime \prime}}$, $U_{W_I}$ and $U$ are independent, zero mean, unit variance, complex Gaussian random variables and carry the messages $W^{\prime}$, $W^{\prime \prime}$, $W_I$ and the index of the compressed interference, respectively. Furthermore, $U_{W^{\prime}}$, $U_{W^{\prime \prime}}$ and $U$ are independent of $X_I$ and $\hat{X}_I$ whereas $\mathbb{E}[U_{W_I}X_I ]=\sqrt{P_I}$. The destination first decodes the codeword $U$ and thus recovers $\hat{X}_I$, and then it decodes messages $W^{\prime}$, $W^{\prime\prime}$ and $W_{I}$ jointly using the knowledge of $U$ and $\hat{X}_I$. \hfill$\Box$
\vspace{-0.5cm}
\subsection{Discussions}
For comparison purposes, we also show the performance
of the Scheme Analog Input Description, referred to as AID \cite{Zaidi_new} \cite{Zaidi_orthogonal}. In this scheme, the source generates the codeword to be
transmitted by the relay as if the relay knew the interference and the message
non-causally and they used DPC jointly. The source then quantizes this
codeword and sends it to the relay through the source-relay link. The relay
simply forwards a scaled version of the quantized signal received from the
source. The achievable rate for DM and AWGN are given in [17, Theorem 2] and [17, Theorem 4], respectively. For the DMC model, [17, Theorem 2] can
be easily modified by setting $V=X_{1R}$. For Gaussian case, we incorporate
complex channel gains into [1, Theorem 4] and obtain \vspace{-0.2cm}%
\begin{align}
R_{AID}= &  \displaystyle\max_{\substack{\gamma:\gamma\in\lbrack
0,1]}}\mathcal{C}\left(  \frac{(|h_{SD}|\sqrt{\gamma P_{S}}+|h_{RD}%
|\sqrt{P_{R}-D})^{2}}{1+|h_{RD}|^{2}D}\right)  \\
&  \hbox{ where }D=\frac{P_{R}}{|h_{SR}|^{2}(1-\gamma)P_{S}+1}.\nonumber
\end{align} \vspace{-1.2cm}\newline
A special case of the model presented in this paper is a multihop channel characterized by\newline
$P_{Y_{R}Y_{D}|X_{SD},X_{SR},X_{R},X_{I}}=P_{Y_{R}|X_{SR}}P_{Y_{D}|X_{R},X_{I}}$.
The achievable rates of this section can be
easily specialized to the multihop channel. Specifically, for DM model, we remove the dependence of $Y_{D}$ on
$X_{SD}$ and we set $X_{SD}={\O }$. For Gaussian case, we set
$h_{SD}=0$ and hence $X_{SD}=0$. An
achievable rate  for the multihop channel by treating the interference as i.i.d. state was derived in \cite{song}. This scheme, denoted by NL-DF,
utilizes nested lattice codes to cancel an integer part of the interference
while treating the residual of the interference as noise. The achieved rate for AWGN model can be written as \cite{song}: \vspace{-0.3cm}
\begin{equation}
R_{NL-DF}=\left[  \log\left(  \frac{|h_{SR}|^{2}|h_{RD}|^{2}P_{S}P_{R}%
+|h_{SR}|^{2}P_{S}+|h_{RD}|^{2}P_{R}+1}{|h_{SR}|^{2}P_{S}+|h_{RD}|^{2}P_{R}%
+2}\right)  \right]  ^{+}.\label{eq_song}%
\end{equation}

\section{On the Optimality of Interference Forwarding}
\label{capacity_results} In this section, we consider a special case of general model considered so far
where $Y_{D}=(Y_{D_{1}},Y_{D_{2}})$ and the channel to the destination
factorizes as \vspace{-0.6cm}
\begin{equation}
P_{Y_{D}|X_{SD},X_{R},X_{I}}=P_{Y_{D_{1}}|X_{SD}}\cdot P_{Y_{D_{2}}%
|X_{R},X_{I}},\label{special_class_prob_model}%
\end{equation} \vspace{-1.4cm}\newline
as depicted in Fig. \ref{Special_class_relay_channel_w_interference_informed_source}. This
corresponds to a model where the links $S-D$ and $R-D$ are orthogonal
to each other, in addition to being orthogonal to the $S-R$ channel $P_{Y_{R}|X_{SR},X_{R}}$. In
other words, this scenario can be seen as the parallel of a multihop channel
$S-R-D$ and a direct channel $S-D$. Moreover, from
(\ref{special_class_prob_model}), the interference affects the $R-D$ channel
only. We are interested in obtaining general guidelines on how the
interference information at the source should be leveraged. In particular,
since the interference only affects one of the parallel channels, namely the
multihop link $S-R-D,$ should the $S-D$ channel be used to provide
interference information so as to facilitate decoding on the $S-R-D$ link? A
similar question can be of course posed for the case where interference
affects only the $S-D$ link.

The question is motivated by reference \cite{1}, where it is shown that if the
interference is \textit{unstructured} and the relay is informed about the
source message (but not the interference), interference information should
\textit{not} be forwarded on the $S-D$ link. A related scenario is also
considered in \cite{Zaidi_new}, where instead \textit{unstructured}
interference affects the $S-R$ and $S-D$ links only, in a dual manner with
respect to the model at hand.

We tackle the question above first for the DMC\ model. The next proposition
shows that, even with structured interference, there is no advantage in using
the $S-D$ link for interference management.

\begin{prop}
\label{converse}In the model of Fig.
\ref{Special_class_relay_channel_w_interference_informed_source}, capacity is
achieved by transmitting independent information on the multihop link $S-R-D$
and on the $S-D$ link. Moreover, the signal sent on the $S-D$ link can be
chosen to be independent of the interference signal.
\end{prop}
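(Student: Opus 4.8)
The plan is to treat achievability and converse separately, the converse being the substantive part. For achievability I would send two independent sub-messages: one over the $S$--$D$ link with an input $X_{SD}$ drawn independently of $X_I$ and decoded from $Y_{D_1}$, and one over the multihop link $S$--$R$--$D$ using the interference-aware multihop specialization of the schemes in Section~\ref{ach_rates}. This is a valid (decoupled) strategy, so it achieves $C_{SD}+C_{mh}$, where $C_{SD}=\max_{P_{X_{SD}}}I(X_{SD};Y_{D_1})$ is the interference-free $S$--$D$ capacity and $C_{mh}$ is the capacity of the isolated multihop channel (interference on the $R$--$D$ hop, $X_I$ known non-causally at the source). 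It remains to prove the matching converse $R\le C_{SD}+C_{mh}$, i.e. that coupling the links---in particular forwarding interference information on $S$--$D$ to aid the multihop decoding---cannot help.

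For the converse I would start from Fano, $nR\le I(W;Y_{D_1}^{n}Y_{D_2}^{n})+n\epsilon_{n}$, and split with the chain rule in the order $I(W;Y_{D_1}^{n}Y_{D_2}^{n})=I(W;Y_{D_2}^{n})+I(W;Y_{D_1}^{n}\mid Y_{D_2}^{n})$, deliberately designating the interference-affected multihop output $Y_{D_2}^{n}$ as the primary observation and $Y_{D_1}^{n}$ as the increment. The incremental term is controlled cleanly: writing $I(W;Y_{D_1}^{n}\mid Y_{D_2}^{n})=H(Y_{D_1}^{n}\mid Y_{D_2}^{n})-H(Y_{D_1}^{n}\mid W,Y_{D_2}^{n})$ and using that by (\ref{special_class_prob_model}) the sequence $Y_{D_1}^{n}$ depends on all inputs only through $X_{SD}^{n}$, I would enlarge the conditioning in the subtracted term to get $H(Y_{D_1}^{n}\mid W,Y_{D_2}^{n})\ge H(Y_{D_1}^{n}\mid X_{SD}^{n})=\sum_{i}H(Y_{D_1,i}\mid X_{SD,i})$, while $H(Y_{D_1}^{n}\mid Y_{D_2}^{n})\le\sum_{i}H(Y_{D_1,i})$. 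This yields $I(W;Y_{D_1}^{n}\mid Y_{D_2}^{n})\le\sum_{i}I(X_{SD,i};Y_{D_1,i})\le nC_{SD}$. The decisive observation is that this bound involves only the marginal $P_{X_{SD,i}}$ and the fixed kernel $P_{Y_{D_1}|X_{SD}}$, into which $X_I$ never enters; hence the increment contributed by the $S$--$D$ link can never exceed the interference-free $C_{SD}$ and is attained with $X_{SD}$ independent of $X_I$. This is exactly the assertion that the $S$--$D$ signal may be taken independent of the interference.

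It then remains to bound $I(W;Y_{D_2}^{n})$ by $nC_{mh}$. Because neither $X_{SD}$ nor $Y_{D_1}$ enters $P_{Y_R|X_{SR}X_R}$ or $P_{Y_{D_2}|X_RX_I}$, the path $W\to X_{SR}^{n}\to Y_R^{n}\to X_R^{n}\to Y_{D_2}^{n}$, with $X_I^{n}$ affecting only the last hop, is statistically identical to the isolated multihop channel, so the full code induces a legitimate input distribution on that sub-channel. Invoking its converse---the $S$--$R$ cut governed by $I(X_{SR};Y_R\mid X_R)$ (which does not see $X_I$) and the interference-aware $R$--$D$ cut---as a mutual-information inequality gives $I(W;Y_{D_2}^{n})\le nC_{mh}+n\epsilon'_{n}$, and summing the two bounds yields $nR\le n(C_{mh}+C_{SD})+n\epsilon''_{n}$, matching achievability.

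I expect the main obstacle to be precisely this last step: it must be phrased as a genuine information inequality valid even though $W$ is decoded jointly from $(Y_{D_1}^{n},Y_{D_2}^{n})$ rather than from $Y_{D_2}^{n}$ alone, and it relies on the isolated multihop-with-structured-interference admitting a matching single-letter characterization, so that its converse bound coincides with the achievable $C_{mh}$. Correctly charging the description of $X_I$ against the $S$--$R$ rate and handling the state-dependent $R$--$D$ hop inside that characterization is where the care is needed; the chain-rule ordering above is what guarantees that, once this multihop bound is in hand, the interference information conveyable over $S$--$D$ contributes nothing beyond $C_{SD}$.
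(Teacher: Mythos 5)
Your decomposition and half of your converse coincide exactly with the paper's proof (Appendix \ref{appendix_converse}): Fano, the chain rule split $I(W;Y_{D_1}^{n}Y_{D_2}^{n})=I(W;Y_{D_2}^{n})+I(W;Y_{D_1}^{n}|Y_{D_2}^{n})$, and the bound $I(W;Y_{D_1}^{n}|Y_{D_2}^{n})\leq nC_{SD}$ via the Markov chain $(W,Y_{D_2}^{n})-X_{SD}^{n}-Y_{D_1}^{n}$ and the standard point-to-point single-letterization; this is also exactly how the paper extracts the claim that the $S$--$D$ signal can be taken independent of $X_I$. The gap is in your treatment of the remaining term $I(W;Y_{D_2}^{n})$. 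Your plan requires (a) that the isolated multihop channel with structured interference known non-causally at the source admit a matching single-letter capacity $C_{mh}$, and (b) that its converse can be ``invoked as a mutual-information inequality'' bounding $I(W;Y_{D_2}^{n})$ for a code in which $W$ is decoded from the pair $(Y_{D_1}^{n},Y_{D_2}^{n})$ rather than from $Y_{D_2}^{n}$ alone. Neither is available: the paper states explicitly that the $S$--$R$--$D$ capacity ``cannot in general be calculated as a single-letter expression'' (it is a Gelfand--Pinsker-type problem compounded with a causal relay), and an operational converse for the isolated link, by itself, gives no bound on $I(W;Y_{D_2}^{n})$ when $W$ is not required to be decodable from $Y_{D_2}^{n}$. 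You correctly flagged both difficulties as ``the main obstacle,'' but the proof as proposed cannot be completed along this route.

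The paper's resolution is to abandon single-letterization of the multihop part entirely: it \emph{defines} $C_{SRD}$ as the multi-letter quantity in (\ref{converse_2}), $\max_{P_{X_{SR}^{n}|X_{I}^{n}},P_{X_{R}^{n}|Y_{R}^{n}}}I(X_{SR}^{n};Y_{D_2}^{n})$, takes its achievability on the $S$--$R$--$D$ link as following from operating that link with the maximizing strategies, and in the converse replaces your ``invoke the sub-channel converse'' step by a purely information-theoretic one: $h(Y_{D_2}^{n})-h(Y_{D_2}^{n}|W)\leq h(Y_{D_2}^{n})-h(Y_{D_2}^{n}|X_{SR}^{n},W)$, followed by the claimed Markov chain $W-X_{SR}^{n}-Y_{D_2}^{n}$, which yields $I(W;Y_{D_2}^{n})\leq I(X_{SR}^{n};Y_{D_2}^{n})$ and hence the bound by $C_{SRD}$. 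Only the interference-free $S$--$D$ term is ever single-letterized, which is precisely why the argument closes. So the repair to your proof is not to search for a single-letter $C_{mh}$ (none is known) but to restate the result with the multi-letter $C_{SRD}$ and use data processing through $X_{SR}^{n}$. Be aware, though, that this Markov step is itself delicate and worth scrutiny: since $X_{SR}^{n}$ may be a function of both $W$ and $X_{I}^{n}$, conditioning on $X_{SR}^{n}$ can correlate $W$ with $X_{I}^{n}$, and $Y_{D_2}^{n}$ depends on $X_{I}^{n}$; the paper's approach at least reduces everything to this one concrete inequality, whereas your route leaves the key step resting on a single-letter theorem that does not exist.
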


\begin{proof}
We prove this result by evaluating the capacity in multiletter form and
arguing that the derived capacity can be achieved by a scheme that complies
with the statement of Proposition \ref{converse}. In particular, we prove that
the capacity is given by $C=C_{SD}+C_{SRD},$ where \vspace{-0.3cm}
\begin{align}
C_{SD} &  =\max_{P_{X_{SD}}}I(X_{SD};Y_{D_{1}})\label{converse_1}
\\ \text{ and }C_{SRD} &  =\max_{P_{X_{SR}^{n}|X_{I}^{n}},P_{X_{R}^{n}|Y_{R}^{n}%
}}I(X_{SR}^{n};Y_{D_{2}}^{n}),\label{converse_2}%
\end{align}
\vspace{-1.2cm}\newline
and $P_{X_{R}^{n}|Y_{R}^{n}}=%
{\displaystyle\prod\nolimits_{i=1}^{n}}
P_{X_{Ri}|Y_{SR}^{i-1}}.$ Note that $C_{SD}$ is the maximum rate achievable on
the (interference-free) $S-D$ link, which is given by the standard
point-to-point capacity (\ref{converse_1}), while $C_{SRD}$ is the maximum
achievable rate on the $S-R-D$ link. The latter cannot in general be
calculated as a single-letter expression, unlike $C_{SD}.$ Moreover, note that
(\ref{converse_2}) is simply achieved by using the encoding strategies
described by pmfs $P_{X_{SR}^{n}|X_{I}^{n}}$ and
$P_{X_{R}^{n}|Y_{R}^{n}}$. Since by these arguments, $C$ is achievable, we
only need to prove that $C$ is also an upper bound on the capacity. This is
done in Appendix \ref{appendix_converse}.
\end{proof}

We now specialize the result above to the corresponding Gaussian model shown
in Fig. \ref{Special_class_relay_channel_w_interference_informed_source_awgn_special},
which is described by the input and output relations at time instant $i$ \vspace{-0.5cm}
\label{gaussian_model_special}%
\begin{align}
\resizebox{.9 \hsize}{!}{$ Y_{R,i}   =h_{SR,i}X_{SR,i}+Z_{R,i}\hbox{, }Y_{D_{1},i}    =h_{SD,i}X_{SD,i}+Z_{D_{1},i} \hbox{ } \hbox{ } \hbox{ and }\hbox{ }\hbox{ } Y_{D_{2},i}   =h_{RD,i}X_{R,i}+h_{I,i}X_{I,i}+Z_{D_{2},i}$}%
\end{align} \vspace{-1.3cm}\newline
where the noises $Z_{D_{1},i}$ and $Z_{D_{2},i}$ are independent zero mean
complex Gaussian random variable with unit variance. The result of Proposition
\ref{converse} can be easily generalized to a scenario with power constraints
and can thus be applied also to the Gaussian model. Specifically, to simplify
our results, we impose two separate power constraints on $X_{SR}$ and $X_{SD}$ as
\vspace{-0.2cm}
\label{power_constraints}%
\begin{equation}
\frac{1}{n}\sum_{i=1}^{n}\mathbb{E}\left[  |X_{SR,i}|^{2}\right]  \leq
P_{SR}\hbox{ }\hbox{ }\text{ and }\hbox{ }\hbox{ }\frac{1}{n}\sum_{i=1}^{n}\mathbb{E}\left[  |X_{SD,i}%
|^{2}\right]  \leq P_{SD},
\end{equation} \vspace{-1.1cm} \newline
along with the relay power constraint in (\ref{relay_power_constraint}). The
following Proposition obtains the capacity for this model in a more explicit
way than (\ref{converse_1})-(\ref{converse_2}) for some special cases. Note
that $C_{SD}=\mathcal{C}\left(  |h_{SD}|^{2}P_{SD}\right)  ,$ while $C_{SRD}$
is generally unknown. We define%
\vspace{-0.3cm}
\begin{equation}
C_{SRD}^{\prime}=\max\left\{
\begin{array}
[c]{l}%
\mathcal{C}\left(  \frac{|h_{RD}|^{2}P_{R}}{1+|h_{I}|^{2}P_{I}}\right)  ,\\
\min\left\{\mathcal{C}\left(  |h_{RD}|^{2}P_{R}\right)  ,(\mathcal{C}\left(  |h_{RD}|^{2}P_{R}+|h_{I}|^{2}P_{I}\right)  -R_{I})^{+} \right\}
\end{array}
\right.  .
\end{equation}

\begin{prop}
If $\mathcal{C}(|h_{SR}|^{2}P_{SR})\geq R_{I}+\mathcal{C}\left(  |h_{RD}%
|^{2}P_{R}\right)  $, then the scheme (D,U) is optimal and the capacity is
given by \vspace{-0.2cm}
\begin{equation}
C=\mathcal{C}\left(  |h_{SD}|^{2}P_{SD}\right)  +\mathcal{C}\left(
|h_{RD}|^{2}P_{R}\right)  . \label{cap_res_1}%
\end{equation} \vspace{-1.2cm}\newline
If instead $\mathcal{C}(|h_{SR}|^{2}P_{SR})\leq C_{SRD}^{\prime},$ then a
scheme that chooses the best strategy between (N,S) and (N,U) for the given
system parameters is optimal and the capacity is $C=\mathcal{C}\left(  |h_{SD}|^{2}P_{SD}\right)  +\mathcal{C}\left(
|h_{SR}|^{2}P_{SR}\right) $.
\end{prop}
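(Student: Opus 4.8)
The plan is to lean on Proposition \ref{converse}, which already reduces the problem to the multihop link: since the capacity splits as $C=C_{SD}+C_{SRD}$ with $C_{SD}=\mathcal{C}(|h_{SD}|^{2}P_{SD})$ known exactly from (\ref{converse_1}), it suffices to pin down $C_{SRD}$ in each regime. I would show $C_{SRD}=\mathcal{C}(|h_{RD}|^{2}P_{R})$ under the first hypothesis and $C_{SRD}=\mathcal{C}(|h_{SR}|^{2}P_{SR})$ under the second, by first proving two single-letter cut-set upper bounds that hold unconditionally, and then exhibiting a scheme that meets the relevant bound in each case.

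For the converse, let $W'$ be the message carried on the $S-R-D$ link; it is uniform and independent of the interference $X_{I}^{n}$. From Fano, $nR_{SRD}\le I(W';Y_{D_{2}}^{n})+n\epsilon_{n}$. The key device is to introduce $X_{I}^{n}$ as a genie: because $W'\perp X_{I}^{n}$ we have $I(W';Y_{D_{2}}^{n})\le I(W';Y_{D_{2}}^{n}|X_{I}^{n})$. Conditioned on $X_{I}^{n}$ the input $X_{SR}^{n}=f(W',X_{I}^{n})$ is a function of $W'$ alone, so the chain $W'\to Y_{R}^{n}\to X_{R}^{n}\to Y_{D_{2}}^{n}$ is Markov; this yields the relay cut $I(W';Y_{D_{2}}^{n}|X_{I}^{n})\le I(X_{SR}^{n};Y_{R}^{n}|X_{I}^{n})\le n\mathcal{C}(|h_{SR}|^{2}P_{SR})$, where the last inequality drops the conditioning (the $S-R$ channel does not involve $X_{I}$), single-letterizes, and invokes the power constraint with the Gaussian maximum-entropy bound. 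Bounding instead through $X_{R}^{n}$ gives the destination cut: given $X_{I}^{n}$ the known term $h_{I}X_{I}^{n}$ is subtracted from $Y_{D_{2}}^{n}$, leaving a clean point-to-point channel of power $P_{R}$, so $I(W';Y_{D_{2}}^{n}|X_{I}^{n})\le I(X_{R}^{n};Y_{D_{2}}^{n}|X_{I}^{n})\le n\mathcal{C}(|h_{RD}|^{2}P_{R})$. Hence $C_{SRD}\le\min\{\mathcal{C}(|h_{SR}|^{2}P_{SR}),\mathcal{C}(|h_{RD}|^{2}P_{R})\}$ always. I expect this single-letterization to be the main obstacle, precisely because the non-causal interference knowledge correlates $X_{SR}^{n}$ with $X_{I}^{n}$; running the bound on the interference-independent message $W'$ and genie-aiding $X_{I}^{n}$ before invoking the Markov structure is what removes the spurious $I(X_{SR}^{n};X_{I}^{n})$ term.

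For achievability under $\mathcal{C}(|h_{SR}|^{2}P_{SR})\ge R_{I}+\mathcal{C}(|h_{RD}|^{2}P_{R})$, I would specialize Scheme (D,U): the source sends the interference index $W_{I}$ (rate $R_{I}$) together with $W'$ over the clean $S-R$ link, which is feasible exactly because $R_{I}+\mathcal{C}(|h_{RD}|^{2}P_{R})\le\mathcal{C}(|h_{SR}|^{2}P_{SR})$; since $X_{SD}$ does not reach $Y_{D_{2}}$, the joint MU-DPC of Proposition \ref{prop(2,U,D,GC)} collapses to single-user dirty-paper coding at the relay, which regenerates $X_{I}^{n}$ from the known interferer codebook and eliminates the interference, achieving $R'=\mathcal{C}(|h_{RD}|^{2}P_{R})$. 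This meets the destination cut, so $C_{SRD}=\mathcal{C}(|h_{RD}|^{2}P_{R})$ and (\ref{cap_res_1}) follows.

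For achievability under $\mathcal{C}(|h_{SR}|^{2}P_{SR})\le C_{SRD}'$, the source devotes the entire $S-R$ link to $W'$ and keeps the relay uninformed about the interference. On the second hop I would take the better of Scheme (N,U), in which the destination treats $h_{I}X_{I}$ as Gaussian noise and attains $\mathcal{C}(|h_{RD}|^{2}P_{R}/(1+|h_{I}|^{2}P_{I}))$, and Scheme (N,S), in which the destination jointly decodes $W'$ and the structured interference as a two-user MAC and attains $\min\{\mathcal{C}(|h_{RD}|^{2}P_{R}),(\mathcal{C}(|h_{RD}|^{2}P_{R}+|h_{I}|^{2}P_{I})-R_{I})^{+}\}$; their maximum is exactly $C_{SRD}'$. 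The end-to-end multihop rate is then $\min\{\mathcal{C}(|h_{SR}|^{2}P_{SR}),C_{SRD}'\}=\mathcal{C}(|h_{SR}|^{2}P_{SR})$ by the regime hypothesis, matching the relay cut and giving $C=\mathcal{C}(|h_{SD}|^{2}P_{SD})+\mathcal{C}(|h_{SR}|^{2}P_{SR})$. The conceptual payoff is that when the $S-R$ link is the bottleneck it is wasteful to spend its capacity on forwarding $W_{I}$, so informing the relay cannot beat the relay-cut bound already achieved by the uninformed schemes.
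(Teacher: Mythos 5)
Your proposal is correct and takes essentially the same route as the paper: both reduce the problem to $C_{SRD}$ via Proposition \ref{converse}, upper-bound $C_{SRD}$ by the relay and destination cuts, and achieve these bounds with scheme (D,U) in the first regime and with the uninformed-relay schemes (N,S)/(N,U) attaining $C_{SRD}^{\prime}$ in the second. The only difference is one of detail: where the paper simply invokes ``cut-set arguments,'' you spell out the genie argument (conditioning on $X_{I}^{n}$, valid since $W^{\prime}$ is independent of $X_{I}^{n}$) that makes those bounds rigorous even though the transmitted signals are correlated with the interference.
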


\begin{proof}
If $\mathcal{C}(|h_{SR}|^{2}P_{SR})\geq R_{I}+\mathcal{C}\left(  |h_{RD}%
|^{2}P_{R}\right)  ,$ then the source can provide both interference and useful
message to the relay without loss of optimality, since the rate of the
message can never be larger than $\mathcal{C}\left(  |h_{RD}|^{2}P_{R}\right)
$ by cut-set arguments. Scheme (D,U) is thus optimal and achieves the
interference-free capacity (\ref{cap_res_1}). The case $\mathcal{C}(|h_{SR}|^{2}P_{SR})\leq
C_{SRD}^{\prime}$ is more complex. From \cite{osvaldo} \cite{Baccelli} it is
known that the maximum rate on the R-D link, assuming that the relay is
unaware of the interference is given by $C_{SRD}^{\prime}.$ This is achieved
by having the destination either treat interference as noise or perform
joint decoding of source information and interference. By the cut-set bound we also know that $C_{SRD}\leq
\mathcal{C}(|h_{SR}|^{2}P_{SR}).$ However, rate $C_{SRD}=\mathcal{C}%
(|h_{SR}|^{2}P_{SR})$ is achievable if $\mathcal{C}(|h_{SR}|^{2}P_{SR})\leq
C_{SRD}^{\prime}$ by not informing the destination about the interference and using the
decoding strategy that attains $C_{SRD}^{\prime}.$
\end{proof}
\vspace{-0.5cm}
\section{Ergodic Fading}

\label{ergodic fading} In this section, we study the effect of ergodic fading in model (\ref{gaussian_model})
on the performance of the proposed schemes.
We recall that the instantaneous values of the channels are only known to the receivers, while the transmitters
only have knowledge of the channel statistics. As for the latter, we assume
that channel gains $h_{SR}$, $h_{SD}$, $h_{RD}$ and $h_{I}$ are independent
Ricean distributed with parameters $K_{SR}$, $K_{SD}$, $K_{RD}$, $K_{I}$,
i.e., $h_{SR}=\mu_{SR}+z_{SR}$ where $\mu_{SR}$ represents the direct
(deterministic) line of sight component and $z_{SR}\sim \mathcal{CN}(0,\sigma_{SR}^{2})$ such that
$|\mu_{SR}|^{2}+\sigma_{SR}^{2}=1$ and $|\mu_{SR}|^{2}/\sigma_{SR}^{2}=K_{SR}%
$, and likewise for other channel gains.
\vspace{-0.7cm}
\subsection{No Relay Case} \label{Sec_no_relay_ergodic}
We first study the point-to-point channel, i.e., where the relay
is not present. This forms a foundation of the multihop relay channel investigated in Sec. \ref{section_fading_multihop}. For this scenario, the achievable rate with the unstructured approach is given by
\vspace{-0.2cm} \begin{align}
R_{U}=\max_{\alpha}\mathbb{E}\left[  \log_{2}\left(  \frac{(|h_{SD}|^{2}%
P_{S})(|h_{SD}|^{2}P_{S}+|h_{I}|^{2}P_{I}+1)}{|h_{SD}|^{2}|h_{I}|^{2}%
P_{S}P_{I}(1-2Re(\alpha)+|\alpha|^{2})+\alpha^{2}|h_{I}|^{2}P_{I}+|h_{SD}%
|^{2}P_{S}}\right)  \right]  .\label{DPC_fading_Rate}%
\end{align}
\vspace{-1.0cm}\newline
We employ GP coding with linear assignment of auxiliary random variable $U$ with an inflation factor $\alpha$ \cite{Bennatan}. The parameter $\alpha$ is chosen to be fixed for all fading levels due to the lack of CSIT and is optimized numerically, as opposed to the approaches in \cite{Zhang}, \cite{Vaze} and \cite{mitran}.

For the structured approach, from \cite{maric}, we easily obtain the achievable rate%
\vspace{-0.3cm} \begin{align}\label{no_relay_fading_rate_structured}
R_{S}=\max_{\substack{\rho,\rho_{I},\rho_{I}^{\prime}:\\|\rho|,|\rho
_{I}|,|\rho_{I}^{\prime}|\in\lbrack0,1]}} &  \min\left\{
\begin{array}
[c]{ll}%
\mathbb{E}\left[  \mathcal{C}\left(  |h_{SD}\rho|^{2}P_{S}\right)  \right]
), & \\
\left(  \mathbb{E}\left[  \mathcal{C}\left(  |h_{SD}\rho|^{2}P_{S}+|h_{SD}%
\rho_{I^{\prime}}|^{2}P_{S}+|h_{SD}\rho_{I}\sqrt{P_{S}}+h_{I}\sqrt{P_{I}}%
|^{2}\right)  \right]  -R_{I}\right)  ^{+} &
\end{array}
\right.  \nonumber\\
&  \hbox{ subject to }|\rho|^{2}+|\rho_{I}|^{2}+|\rho_{I^{\prime}}|^{2}\leq1
\end{align} \vspace{-1.3cm}\newline
where the source allocates powers for forwarding its own message and interference to the destination. In particular, power $|\rho_{I}|^{2}P_{S}$ is
used to transmit interference by forwarding the same codeword transmitted by
the interferer, while power $|\rho_{I^{\prime}}|^{2}P_{S}$ is devoted to
transmission of the interference message via an independently generated
codeword. The rationale for this is that, as $K\rightarrow\infty$,
fading becomes deterministic and it is optimal for the source to transmit
coherently with the interferer by setting $\rho_{I^{\prime}}=0$. Instead, as
$K\rightarrow0$ (Rayleigh fading), it is more advantageous for the source to
forward interference by using an independent codebook by setting $\rho_{I}=0$.
Hence, the source employs both of the interference forwarding strategies to
accommodate intermediate $K$ values.
\vspace{-0.5cm}
\subsection{Multihop Relay Channel}\label{section_fading_multihop}

In this section, we include the relay in the ergodic fading model by
considering the special case of a multihop relay channel, i.e., $h_{SD}=0$. The detailed analysis and insights can be extended to the general orthogonal components relay channel.

The following propositions report the achievable rates of the proposed schemes
for the scenario at hand. The proofs are straightforward consequences of the
analysis above and Sec. \ref{ach_rates}.
%
%
%
\begin{prop}
For (D,U), the following rate is achievable for the multihop fading model:
\vspace{-0.3cm}\begin{equation}
R_{(D,U)}=\max_{\alpha}\min\left\{
\begin{array}
[c]{l}%
\left(  \mathbb{E}\left[  \mathcal{C}\left(  |h_{SR}|^{2}P_{S}\right)
\right]  -R_{I}\right)  ^{+},\\
\mathbb{E}\left[  \log_{2}\left(  \frac{(|h_{RD}|^{2}P_{R})(|h_{RD}|^{2}%
P_{R}+(|h_{I}|^{2}P_{I}+1)}{|h_{RD}|^{2}|h_{I}|^{2}P_{R}P_{I}(1-2Re(\alpha
)+|\alpha|^{2})+\alpha^{2}|h_{I}|^{2}P_{I}+|h_{RD}|^{2}P_{R}}\right)  \right]
\end{array}
\right.
\end{equation} \vspace{-0.8cm}\newline
\end{prop}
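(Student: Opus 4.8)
The plan is to derive this rate by specializing Scheme (D,U) from Sec.~\ref{ach_rates} to the multihop fading model and then handling the two orthogonal hops separately, combining them through the decode-and-forward bottleneck. First I would impose $h_{SD}=0$, which removes the direct $S-D$ link. Consequently the source assigns all of its power $P_S$ to $X_{SR}$ (transmitting $X_{SD}$ is useless), the directly-sent message $W''$ disappears so that $R''=0$, and in the general DM expression (\ref{(2,D,U,DM)}) the auxiliary $U_S=X_{SD}+\alpha_S X_I$ degenerates to a scaled copy of the interference carrying no message. Hence $I(U_S;Y_D|U_R)-I(U_S;X_I|U_R)=0$, the first term of the min collapses to the $S-R$ constraint $(I(X_{SR};Y_R|X_R)-R_I)^+$, and the second term collapses to the single-encoder Gelfand--Pinsker rate $I(U_R;Y_D)-I(U_R;X_I)$.

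For the first hop, the source must convey its own message $W'$ together with the interference index $W_I$ to the relay over the ergodic-fading $S-R$ link. Since the relay has perfect CSI $h_{SR}$ while the source knows only the statistics, Gaussian inputs with CSIR at the relay give reliable sum rate $\mathbb{E}[\mathcal{C}(|h_{SR}|^{2}P_{S})]$; reserving $R_I$ of this for the interference index yields $R'\le(\mathbb{E}[\mathcal{C}(|h_{SR}|^{2}P_{S})]-R_{I})^{+}$, which is precisely the first term of the min.

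For the second hop, once the relay holds both $W'$ and the interference sequence $X_I^n$ (reconstructed from $W_I$ through the interferer's known codebook), the relay-to-destination channel is exactly the point-to-point state-dependent fading channel studied in Sec.~\ref{Sec_no_relay_ergodic}, with the source role played by the relay, i.e.\ with $(h_{SD},P_S)$ replaced by $(h_{RD},P_R)$. Because only the relay is active toward the destination, MU-GP reduces to single-user GP, so I would reuse the unstructured no-relay rate (\ref{DPC_fading_Rate}) after this substitution. Concretely I take Costa-style coding with the linear assignment $U_R=X_R+\alpha X_I$, a single inflation factor $\alpha$ held fixed across fading realizations (no CSIT), $X_R\sim\mathcal{CN}(0,P_R)$ independent of $X_I\sim\mathcal{CN}(0,P_I)$, and evaluate the averaged difference $\mathbb{E}[I(U_R;Y_D)-I(U_R;X_I)]$; this produces the second term of the min.

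The end-to-end achievable rate is then the minimum of the two hop rates, maximized over $\alpha$, since $W'$ must first be decoded at the relay and then delivered across the interference-corrupted second hop, which is the standard decode-and-forward bottleneck for a multihop link. The only non-routine step is the evaluation of the Gelfand--Pinsker difference $\mathbb{E}[I(U_R;Y_D)-I(U_R;X_I)]$ under a fixed inflation factor and averaged over the Ricean fading; however, this is identical to the computation already carried out for (\ref{DPC_fading_Rate}) in Sec.~\ref{Sec_no_relay_ergodic}, so it can be imported directly rather than redone, making the proposition a straightforward consequence of the earlier analysis.
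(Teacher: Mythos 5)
Your proposal is correct and follows exactly the route the paper intends: the paper states these propositions are ``straightforward consequences'' of Scheme (D,U) in Sec.~III and the point-to-point fading analysis of Sec.~V-A, and you carry out precisely that specialization --- setting $h_{SD}=0$ so the first min term reduces to the $S$-$R$ digital-sharing constraint $(\mathbb{E}[\mathcal{C}(|h_{SR}|^{2}P_{S})]-R_{I})^{+}$, and importing the fixed-inflation-factor GP rate (\ref{DPC_fading_Rate}) with $(h_{SD},P_{S})$ replaced by $(h_{RD},P_{R})$ for the second hop. The only cosmetic imprecision is describing the degenerate $U_{S}$ as a scaled copy of $X_{I}$ (one should take $U_{S}$ constant, i.e.\ $\alpha_{S}=0$, so that both $I(U_{S};Y_{D}|U_{R})$ and $I(U_{S};X_{I}|U_{R})$ vanish), which does not affect the argument.
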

\vspace{-0.6cm}
\begin{prop}
For (D,S), the following rate is achievable for the multihop fading model:
\vspace{-0.6cm}\begin{align}
R_{(D,S)}=\max_{\substack{\bar{\rho},\bar{\rho}_{I},\bar{\rho}_{I^{\prime}}:\\|\bar{\rho}|,|\bar{\rho}
_{I}|,|\bar{\rho}_{I^{\prime}}|\in\lbrack0,1]}} &  \min\left\{
\begin{array}
[c]{ll}%
\left(  \mathbb{E}\left[  \mathcal{C}\left(  |h_{SR}|^{2}P_{S}\right)
\right]  -R_{I}\right)  ^{+}, & \\
\mathbb{E}\left[  \mathcal{C}\left(  |h_{RD}\bar{\rho}|^{2}P_{R}\right)  \right]
), & \\
\left(  \mathbb{E}\left[  \mathcal{C}\left(  |h_{RD}\bar{\rho}|^{2}P_{R}+|h_{RD}%
\bar{\rho}_{I^{\prime}}|^{2}P_{R}+|h_{RD}\bar{\rho}_{I}\sqrt{P_{R}}+h_{I}\sqrt{P_{I}}%
|^{2}\right)  \right]  -R_{I}\right)  ^{+} &
\end{array}
\right. \nonumber \\
&  \hbox{ subject to }|\bar{\rho}|^{2}+|\bar{\rho}_{I}|^{2}+|\bar{\rho}_{I^{\prime}}|^{2}\leq1
\end{align} \vspace{-2cm}\newline
\end{prop}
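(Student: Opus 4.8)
The plan is to obtain this rate as the specialization of the general Scheme (D,S) to the multihop fading channel with $h_{SD}=0$, exploiting the fact that the $S$--$R$ and $R$--$D$ links are orthogonal so that the scheme decomposes into two back-to-back phases whose achievable rate is the minimum of a source-to-relay constraint and a relay-to-destination constraint. Since $h_{SD}=0$, there is no direct link and the entire message is routed through the relay, so we set $W''=\emptyset$ and $R=R'$. First I would treat the two hops separately and then combine them through the outer $\min$.

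For the $S$--$R$ hop, the source transmits the message $W$ together with the interference index $W_I$ digitally over the orthogonal source-to-relay channel. Because only the relay knows $h_{SR}$ while the source knows merely the statistics, the relevant rate is the ergodic capacity with receiver CSI, achieved by a fixed Gaussian input of power $P_S$. Reliable recovery of both the message and the interference index at the relay therefore requires $R+R_I\leq\mathbb{E}[\mathcal{C}(|h_{SR}|^{2}P_S)]$, which yields the first term $\bigl(\mathbb{E}[\mathcal{C}(|h_{SR}|^{2}P_S)]-R_I\bigr)^{+}$. At the end of this phase the relay is fully informed of both $W$ and the interferer's codeword $X_I^n$.

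For the $R$--$D$ hop I would argue that, with the relay now playing exactly the role that the source plays in the no-relay structured scheme, the problem reduces to the point-to-point channel with structured interference forwarding analyzed in \eqref{no_relay_fading_rate_structured} and \cite{maric}, under the substitutions $h_{SD}\!\to\!h_{RD}$, $P_S\!\to\!P_R$, $\rho\!\to\!\bar{\rho}$, $\rho_I\!\to\!\bar{\rho}_I$, $\rho_{I'}\!\to\!\bar{\rho}_{I'}$. The relay splits its power into $|\bar{\rho}|^{2}P_R$ for its own message, $|\bar{\rho}_I|^{2}P_R$ for \emph{coherent} interference forwarding (retransmitting the interferer's codeword), and $|\bar{\rho}_{I'}|^{2}P_R$ for forwarding the interference message through an independently generated codebook, subject to $|\bar{\rho}|^{2}+|\bar{\rho}_I|^{2}+|\bar{\rho}_{I'}|^{2}\leq1$. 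The destination jointly decodes $W$ and $W_I$ using the structured approach; the message-alone constraint (interference treated as known and removed) gives the second term $\mathbb{E}[\mathcal{C}(|h_{RD}\bar{\rho}|^{2}P_R)]$, while the joint-decoding constraint $R+R_I\leq\mathbb{E}\bigl[\mathcal{C}\bigl(|h_{RD}\bar{\rho}|^{2}P_R+|h_{RD}\bar{\rho}_{I'}|^{2}P_R+|h_{RD}\bar{\rho}_I\sqrt{P_R}+h_I\sqrt{P_I}|^{2}\bigr)\bigr]$ gives, after moving $R_I$ to the right-hand side, the third term. As in the no-relay case, the lack of CSIT forces the power-split parameters to be fixed across all fading states, with coherent forwarding ($\bar{\rho}_{I'}=0$) optimal as $K\to\infty$ and independent forwarding ($\bar{\rho}_I=0$) preferable under Rayleigh fading, which is why both components are retained.

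Combining the source-to-relay bound with the two relay-to-destination bounds through the outer $\min$ and maximizing over the fixed parameters $\bar{\rho},\bar{\rho}_I,\bar{\rho}_{I'}$ then produces the claimed $R_{(D,S)}$. The main obstacle I anticipate is not the algebra but justifying that the two orthogonal phases can be cleanly concatenated in the fading setting despite the absence of transmitter CSI: one must verify that the ergodic source-to-relay rate is attainable with a channel-independent input distribution and that the joint-decoding error analysis at the destination goes through with fixed forwarding coefficients, so that the instantaneous-channel expectations factor correctly in each mutual-information term. Once this is established, the expression follows as a straightforward consequence of the analysis of Scheme (D,S) in Section~\ref{ach_rates} and the structured point-to-point rate \eqref{no_relay_fading_rate_structured}.
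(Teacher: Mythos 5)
Your proposal is correct and follows essentially the same route as the paper, whose own justification is simply that the result is a ``straightforward consequence'' of the no-relay structured-approach rate in (\ref{no_relay_fading_rate_structured}) and the digital-sharing analysis of Sec.~\ref{ach_rates}: the first hop gives the constraint $R+R_I\leq\mathbb{E}[\mathcal{C}(|h_{SR}|^{2}P_S)]$, and the second hop is exactly (\ref{no_relay_fading_rate_structured}) with the relay playing the source's role ($h_{SD}\to h_{RD}$, $P_S\to P_R$), combined through the outer minimum. Your additional care about channel-independent inputs and fixed forwarding coefficients under no CSIT is consistent with how the paper handles the fading case.
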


\begin{rem} For Gaussian model (\ref{gaussian_model}), structured strategies in no-relay case as well as in (D,S) are inferior to the unstructured ones in no-fading case due to the ability of DPC completely eliminating the effect of the interference. However as shown in Section \ref{num_res}, these strategies become meaningful under fading where precoding can not completely cancel the interference.
\end{rem}
\begin{prop}
For (C,U), the following rate is achievable for the multihop fading model: \vspace{-0.5cm}
\begin{align}
 R_{(C,U)}= \max_{r_q,\alpha} &\min\left\{
\begin{array}
[c]{l}%
\left(  \mathbb{E}\left[  \mathcal{C}\left(  |h_{SR}|^{2}P_{S}\right)
\right]  -r_q\right)  ^{+},\\
\mathbb{E}\left[  \log_{2}\left(  \frac{(|h_{RD}|^{2}P_{R})(|h_{RD}|^{2}%
P_{R}+(|h_{I}|^{2}(P_{I}-D)+N_{eq})}{|h_{RD}|^{2}|h_{I}|^{2}P_{R}(P_{I}-D)(1-2Re(\alpha
)+|\alpha|^{2})+N_{eq}(\alpha^{2}|h_{I}|^{2}(P_{I}-D)+|h_{RD}|^{2}P_{R})}\right)  \right]
\end{array}
\right.
\end{align} \vspace{-1.2cm} \newline
where $N_{eq}=|h_I|^2D+1$ and $D=P_I2^{-r_q}$.
\end{prop}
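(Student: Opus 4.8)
The plan is to obtain this rate by specializing the general Scheme (C,U) of Section~\ref{ach_rates} to the multihop geometry $h_{SD}=0$ and then promoting the resulting single-letter expression to an ergodic-fading average. First I would set $X_{SD}=0$ and $\gamma=0$, so that the whole source power $P_{S}$ is spent on $X_{SR}$ and the entire message must travel through the relay (i.e. $W=W^{\prime}$, $R^{\prime\prime}=0$). With receiver-side CSI the orthogonal $S-R$ link is an ordinary point-to-point fading channel, whose ergodic capacity with a fixed Gaussian input of power $P_{S}$ is $\mathbb{E}[\mathcal{C}(|h_{SR}|^{2}P_{S})]$.

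Next I would handle the compression and the $S-R$ bottleneck. As in the Gaussian (C,U) derivation, the i.i.d. source $X_{I}\sim\mathcal{CN}(0,P_{I})$ is quantized through the reverse test channel $X_{I}=\hat{X}_{I}+Q$ with $Q\sim\mathcal{CN}(0,D)$ independent of $\hat{X}_{I}$ and $D=P_{I}2^{-r_{q}}$; the description rate is then $r_{q}=\log_{2}(P_{I}/D)$ and the reconstruction has power $\mathbb{E}[|\hat{X}_{I}|^{2}]=P_{I}-D$. The relay must recover both the message (rate $R$) and the quantization index (rate $r_{q}$) from its orthogonal link, which requires $R+r_{q}\leq\mathbb{E}[\mathcal{C}(|h_{SR}|^{2}P_{S})]$; this yields the first branch $(\mathbb{E}[\mathcal{C}(|h_{SR}|^{2}P_{S})]-r_{q})^{+}$ of the minimum.

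The core of the argument is the $R-D$ link. At the relay I would write the destination's observation as $Y_{D}=h_{RD}X_{R}+h_{I}\hat{X}_{I}+(h_{I}Q+Z_{D})$, treat the known reconstruction $\hat{X}_{I}$ (power $P_{I}-D$) as the state against which the relay precodes using the unstructured approach, and absorb the unknown residual $h_{I}Q$ together with $Z_{D}$ into an effective noise of variance $N_{eq}=|h_{I}|^{2}D+1$. Because there is no CSIT, the relay must use a Gel'fand--Pinsker code with a linear auxiliary and a single inflation factor $\alpha$ held fixed across all fading states, exactly as in the no-relay analysis of Section~\ref{Sec_no_relay_ergodic}. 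I would then invoke the no-relay fading rate (\ref{DPC_fading_Rate}) with the substitutions signal $|h_{SD}|^{2}P_{S}\mapsto|h_{RD}|^{2}P_{R}$, state $|h_{I}|^{2}P_{I}\mapsto|h_{I}|^{2}(P_{I}-D)$, and noise $1\mapsto N_{eq}$: these replacements map the numerator and the denominator of (\ref{DPC_fading_Rate}) term-by-term onto the second branch of $R_{(C,U)}$. Taking the minimum of the two branches and maximizing over $(r_{q},\alpha)$ completes the proof.

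The step I expect to be the real obstacle is this $R-D$ analysis, for two reasons. First, one must justify that the Gel'fand--Pinsker rate is achievable under ergodic fading with an inflation factor that cannot be adapted to the instantaneous gains; this is not Costa's interference-free result but rather the fading-paper construction of \cite{Bennatan}, whose achievable rate is $\mathbb{E}[I(U;Y_{D}|h_{RD},h_{I})]-I(U;\hat{X}_{I})$ for a fixed auxiliary $U$. Second, I would have to check carefully that treating the quantization residual $h_{I}Q$ as an additional Gaussian noise term is legitimate, i.e. that the substitution $1\mapsto N_{eq}$ preserves the exact GP rate expression; this is precisely where the clean term-by-term correspondence with (\ref{DPC_fading_Rate}) must be verified rather than assumed.
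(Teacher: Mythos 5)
Your proposal is correct and follows essentially the same route as the paper, which gives no detailed argument but states that these multihop-fading propositions are "straightforward consequences" of the Scheme (C,U) analysis in Sec.~\ref{ach_rates} and the no-CSIT fixed-inflation-factor GP construction of Sec.~\ref{Sec_no_relay_ergodic} (citing \cite{Bennatan}), i.e., exactly your combination of the rate-$r_q$ compression constraint on the $S$--$R$ link with relay-side precoding against $\hat{X}_I$ and the residual $h_I Q + Z_D$ absorbed into $N_{eq}=|h_I|^2 D + 1$. Your term-by-term substitution into (\ref{DPC_fading_Rate}) is the intended verification, and your two flagged concerns are precisely the points the paper disposes of by the citation to \cite{Bennatan} and by the independence of $Q$ from $\hat{X}_I$ in the reverse test channel.
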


\begin{prop}
For (C,S,1), the following rate is achievable for the multihop fading model:%
\vspace{-0.3cm}\begin{align}
 R_{(C,S,1)}= \max_{\substack{r_{q},\bar{\rho},\bar{\rho}_{I},\bar{\rho}_{I^{\prime}}:\\|\bar{\rho}
|,|\bar{\rho}_{I}|,|\bar{\rho}_{I^{\prime}}|\in\lbrack0,1]}} & \min\left\{
\begin{array}
[c]{ll}%
\left(  \mathbb{E}\left[  \mathcal{C}\left(  |h_{SR}|^{2}P_{S}\right)
\right]  -r_{q}\right)  ^{+}, & \\
\mathbb{E}\left[  \mathcal{C}\left(|h_{RD}\bar{\rho}|^{2}P_{R}%
/N_{eq} \right)  \right] , & \\
(  \mathbb{E}[  \mathcal{C}(( |h_{RD}\bar{\rho}|^{2}P_{R} + |h_{RD}\bar{\rho }_{I^{\prime}}|^{2} P_{R}(1-2^{-r_{q}})+ & \\ |h_{RD}\bar{\rho}_I  \sqrt{P_{R}(1-2^{-r_{q}})}+h_{I}\sqrt{P_{I}}|^{2})/N_{eq} )  ] -R_{I})  ^{+} &
\end{array}
\right.\\  & \hbox{ subject to }|\bar{\rho}|^{2}+|\bar{\rho}_{I}|^{2}+|\bar{\rho}_{I^{\prime}}|^{2}\leq1 \nonumber
\end{align}  \vspace{-1.2cm}\newline
where $N_{eq}=|h_{RD}\bar{\rho}_I|^2P_R 2^{-r_q}+|h_{RD}\bar{\rho}_{I^{\prime}}|^2P_R 2^{-r_q}+1$.
\end{prop}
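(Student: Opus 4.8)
The plan is to derive the claimed region by specializing the general Gaussian (C,S,1) rate (\ref{(2,C,S,GC)}) to the multihop model $h_{SD}=0$ and then replacing every per-block mutual information by its ergodic average. First I would set $h_{SD}=0$: the direct source--destination link disappears, so the source puts all of its power on the $S$--$R$ link ($\gamma=0$, hence $(1-\gamma)P_S=P_S$), the directly transmitted message component is empty ($P_{W^{\prime\prime}}=0$ and $W=W^{\prime}$), and the entire message is relayed. Substituting $P_{W^{\prime\prime}}=0$ into (\ref{(2,C,S,GC)}), the second of the four bounds, $(\mathcal{C}(P_{W^{\prime\prime}}+P_{W_I})-R_I)^{+}+(\mathcal{C}(|h_{SR}|^2(1-\gamma)P_S)-r_q)^{+}$, collapses to $(\mathcal{C}(P_{W_I})-R_I)^{+}+(\mathcal{C}(|h_{SR}|^2P_S)-r_q)^{+}$, which is never smaller than the first bound $(\mathcal{C}(|h_{SR}|^2P_S)-r_q)^{+}$ and is therefore inactive in the minimum; only three bounds survive, matching the statement.

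I would then fix the relay input, following the structured no-relay scheme (\ref{no_relay_fading_rate_structured}) with the relay now playing the role of the source. The relay superimposes on $X_R$ three independent Gaussian streams: the message codeword with coefficient $\bar{\rho}$, a copy of the analog-forwarded compressed interference $\hat{X}_I$ sent \emph{coherently} with the interferer (coefficient $\bar{\rho}_I$), and a further copy forwarded through an \emph{independent} component (coefficient $\bar{\rho}_{I'}$), subject to $|\bar{\rho}|^2+|\bar{\rho}_I|^2+|\bar{\rho}_{I'}|^2\le 1$. Exactly as in (\ref{no_relay_fading_rate_structured}), both forwarding modes are retained because coherent forwarding is optimal as $K\to\infty$ while independent forwarding is preferable as $K\to 0$, so both are needed at intermediate Ricean factors. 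Since the transmitters have no CSIT, the parameters $r_q,\bar{\rho},\bar{\rho}_I,\bar{\rho}_{I'}$ are held fixed across fading blocks and only the receivers track the instantaneous gains; the $S$--$R$ sharing constraint $R+r_q\le\mathbb{E}[\mathcal{C}(|h_{SR}|^2P_S)]$ then yields the first bound $(\mathbb{E}[\mathcal{C}(|h_{SR}|^2P_S)]-r_q)^{+}$.

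The key step is to account for the fact that the relay forwards the \emph{quantized} interference. With the Gaussian test channel $\hat{X}_I=\rho X_I+Q'$ and $\rho=1-2^{-r_q}$, the normalized $\hat{X}_I$ splits into a part correlated with $X_I$ carrying a fraction $1-2^{-r_q}$ of its power and an independent quantization residual carrying the complementary fraction $2^{-r_q}$. Hence each forwarding branch delivers a useful, $X_I$-correlated signal scaled by $\sqrt{1-2^{-r_q}}$ and a residual of power proportional to $2^{-r_q}$; collecting the residuals of both branches together with the unit thermal noise gives $N_{eq}=|h_{RD}\bar{\rho}_I|^2P_R2^{-r_q}+|h_{RD}\bar{\rho}_{I'}|^2P_R2^{-r_q}+1$. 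The correlated part of the coherent branch adds to the genuine interference as $|h_{RD}\bar{\rho}_I\sqrt{P_R(1-2^{-r_q})}+h_I\sqrt{P_I}|^2$, while that of the independent branch contributes the separate term $|h_{RD}\bar{\rho}_{I'}|^2P_R(1-2^{-r_q})$. As in the structured approach, the destination either decodes the message treating all interference-related signals as noise, giving $\mathbb{E}[\mathcal{C}(|h_{RD}\bar{\rho}|^2P_R/N_{eq})]$, or jointly decodes the message and $W_I$, giving the third bound; taking the minimum over the three bounds and maximizing over $(r_q,\bar{\rho},\bar{\rho}_I,\bar{\rho}_{I'})$ produces the stated region.

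I expect the main obstacle to be the careful bookkeeping of the two simultaneous forwarding branches and the justification that, with CSIT-free parameters, the per-realization achievable rates coincide with the single-letter Gaussian expressions having $N_{eq}$ as effective noise, so that the ergodic rate is obtained by simply moving $\mathbb{E}[\cdot]$ outside each $\mathcal{C}(\cdot)$. The reduction of the four general bounds to three and the $S$--$R$ splitting constraint are routine; the delicate point is verifying that the joint-decoding bound keeps the coherent combining term $|h_{RD}\bar{\rho}_I\sqrt{P_R(1-2^{-r_q})}+h_I\sqrt{P_I}|^2$ intact while relegating only the quantization residuals of both branches to $N_{eq}$.
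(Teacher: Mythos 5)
Your derivation path is essentially the one the paper intends (the paper itself gives no detailed proof, stating only that the fading propositions are ``straightforward consequences'' of Sec.~\ref{ach_rates} and the no-relay fading analysis): you specialize the Gaussian (C,S,1) rate (\ref{(2,C,S,GC)}) to $h_{SD}=0$ so that $\gamma=0$, $P_{W^{\prime\prime}}=0$, and the second of the four bounds is dominated by the first; you let the relay play the source's role on the $R$--$D$ link with the two interference-forwarding branches (coherent and independent) of (\ref{no_relay_fading_rate_structured}); you apply the quantization split $\hat{X}_I=\rho X_I+Q'$ with $\rho=1-2^{-r_q}$, so that each branch contributes a fraction $1-2^{-r_q}$ of useful power while the residuals of power fraction $2^{-r_q}$ are collected into $N_{eq}$; and you hold all parameters fixed across fading states (no CSIT) so that ergodicity places $\mathbb{E}[\cdot]$ outside each $\mathcal{C}(\cdot)$. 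All of these steps and the resulting formulas check out against the statement.

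However, one justification in your argument is wrong, and if taken literally it would fail to produce the stated second bound. You attribute $\mathbb{E}\left[\mathcal{C}\left(|h_{RD}\bar{\rho}|^{2}P_{R}/N_{eq}\right)\right]$ to the destination ``decoding the message treating all interference-related signals as noise.'' That strategy would place the interferer's power and both forwarded branches in the denominator, giving $\mathbb{E}\left[\mathcal{C}\left(|h_{RD}\bar{\rho}|^{2}P_{R}\big/\left(N_{eq}+|h_{RD}\bar{\rho}_{I'}|^{2}P_{R}(1-2^{-r_{q}})+|h_{RD}\bar{\rho}_{I}\sqrt{P_{R}(1-2^{-r_{q}})}+h_{I}\sqrt{P_{I}}|^{2}\right)\right)\right]$, which is strictly smaller than the claimed bound. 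The stated bound has only the quantization residuals and thermal noise in the denominator precisely because it is the constraint of type $R^{\prime}\leq I(VU;Y_{D}|X_{I})$ in the joint-decoding region (\ref{CS_MAC}), i.e., it corresponds to the error event in which the interference codeword is correctly decoded but the message is not. Relatedly, the second and third bounds are not two alternative decoding strategies whose rates one takes a minimum over (a minimum over alternative strategies would make no sense --- one would take the better of the two); they are the two simultaneous constraints of a single joint decoding of $(W^{\prime},W_{I})$ at the destination, exactly as in the structured approach (\ref{no_relay_fading_rate_structured}). With this correction --- which uses only the joint-decoding machinery you already invoke for the sum-rate bound --- your argument is complete and coincides with the paper's.
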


\begin{prop}
For (C,S,2), the following rate is achievable for the multihop fading model:
 \vspace{-0.4cm}\begin{eqnarray}
&R_{(C,S,2)}=\displaystyle \max_{\substack{r_{q},\bar{\rho},\bar{\rho}_{U}:\\|\bar{\rho}
|,|\bar{\rho}_{U}|\in\lbrack0,1]}}\min\left\{
\begin{array}
[c]{ll}%
\left(  \mathbb{E}\left[  \mathcal{C}\left(  |h_{SR}|^{2}P_{S}\right)
\right]  -r_{q}\right)  ^{+}, & \\
\mathbb{E}\left[  \mathcal{C}\left(|h_{RD}\bar{\rho}|^{2}P_{R}%
 \right)  \right] , & \\
\left(  \mathbb{E}\left[  \log_2\left( (|h_{RD}\bar{\rho}|^{2}P_{R}+1)2^{r_q} +|h_{I}|^2 P_{I}  \right)  \right]
-R_{I}\right)  ^{+} &
\end{array}
\right. \\  & \hbox{ subject to }|\bar{\rho}|^{2}+|\bar{\rho}_{U}|^{2}\leq1 \hbox{ and } r_q \leq \mathbb{E}\left[  \mathcal{C}\left(\frac{|h_{RD}\bar{\rho}_{U}|^{2}P_{R}}{|h_{RD}\bar{\rho}|^{2}P_{R} +|h_{I}|^2 P_{I} + 1 } \right)  \right] \nonumber
\end{eqnarray} \vspace{-1.8cm}\newline
\end{prop}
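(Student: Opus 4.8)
The plan is to obtain this rate as the ergodic-fading, multihop specialization of the AWGN achievable rate for Scheme (C,S,2) in (\ref{(2,C,S,GC)2}), following exactly the multihop recipe of Sec.~\ref{ach_rates} (set $h_{SD}=0$ and $X_{SD}=0$) together with the ergodic-fading treatment already used for the no-relay case in Sec.~\ref{Sec_no_relay_ergodic}. First I would set $h_{SD}=0$, so that the direct link vanishes, $X_{SD}=0$, the power split parameter is $\gamma=0$, and the source devotes all its power to the $S$--$R$ link. Writing $A=|h_{RD}\bar\rho|^2P_R$ for the relay power carrying $W'$ (with $\bar\rho\equiv\bar\rho_{W'}$) and $B=|h_I|^2P_I$, the Gaussian quantities of (\ref{(2,C,S,GC)2}) collapse to $P_{W''}=0$, $P_{W'}=A$, $P_{W_I}=B$, and $P_U=|h_{RD}\bar\rho_U|^2P_R$. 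Because the message $W''$ is carried on the (now absent) direct link, $R''=0$ and hence $R=R'$.

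Since the channel is ergodic with CSI available only at the receivers, I would keep the input covariance (the coefficients $\bar\rho,\bar\rho_U$) fixed across fading states and replace every single-letter mutual information in the DM region (\ref{CS_MAC2}) and in the source--relay constraint by its expectation over the fading, yielding ergodic rates; this is the same device used in Sec.~\ref{Sec_no_relay_ergodic}. The relay still forwards the compression index through the independent codeword $U$, and the destination decodes $U$ first, treating $U_{W'}$ and the interference as noise, which is possible in the ergodic sense iff $r_q\le \mathbb{E}[\mathcal{C}(P_U/(A+B+1))]$ -- exactly the stated side constraint. A point I would flag explicitly is that in this fading/no-CSIT setting the destination's side information varies with the channel, so the scheme forgoes Wyner--Ziv binning and decodes the \emph{full} compression index; equivalently the parameter $x$ in the sketch of (\ref{(2,C,S,GC)2}) is set to $0$, giving $r_q=I(X_I;\hat X_I)=\log_2(P_I/D)$, i.e.\ $D=P_I2^{-r_q}$ and $P_I/D=2^{r_q}$.

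With these substitutions, the four MAC-with-common-message bounds of (\ref{CS_MAC2}), intersected with the $S$--$R$ constraint $R'\le(\mathbb{E}[\mathcal{C}(|h_{SR}|^2P_S)]-r_q)^+$ and using $R=R'$, reduce to the claimed terms. The bound on $R''$ yields the first term $(\mathbb{E}[\mathcal{C}(|h_{SR}|^2P_S)]-r_q)^+$; the bound on $R'+R''$ gives $\mathbb{E}[\mathcal{C}(A)]=\mathbb{E}[\mathcal{C}(|h_{RD}\bar\rho|^2P_R)]$; and the sum bound $I(X_{SD}X_RX_I;Y_D\hat X_I|U)$, evaluated for the Gaussian inputs with quantization test channel $X_I=\hat X_I+Q$, $Q\sim\mathcal{CN}(0,D)$ independent of $\hat X_I$, equals $\log_2((P_I/D)(P_{W'}+1)+P_{W_I})$. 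I would then verify the short identity $(P_I/D)(A+1)+B=(A+1)2^{r_q}+B$ via $P_I/D=2^{r_q}$, which, after taking expectation and subtracting $R_I$, produces the third listed term. The remaining observation is that the second MAC bound (on $R''+R_I$) is redundant: its contribution has the form $(\cdot)^+ + (\mathbb{E}[\mathcal{C}(|h_{SR}|^2P_S)]-r_q)^+$, which is never smaller than the first term and may thus be dropped from the $\min$, leaving the three terms stated. I expect the main obstacle to be bookkeeping rather than conceptual: namely (i) justifying that full-index (non-binned) decoding of $U$ is the operation behind $D=P_I2^{-r_q}$ under fading without CSIT, and (ii) carrying out the Gaussian evaluation of $I(X_{SD}X_RX_I;Y_D\hat X_I|U)$ so that the determinant computation cleanly collapses to the $(A+1)2^{r_q}+B$ form.
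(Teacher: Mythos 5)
Your proposal is correct and follows essentially the same route as the paper: the paper's (largely implicit) proof is exactly the specialization of the AWGN rate (\ref{(2,C,S,GC)2}) to $h_{SD}=0$ with ergodic expectations over fixed (no-CSIT) coefficients, together with the paper's remark that binning is dropped (i.e., $x=0$, so $D=P_I2^{-r_q}$) because the source cannot compute the instantaneous Wyner--Ziv rate. Your additional bookkeeping — the Gaussian evaluation collapsing to $(|h_{RD}\bar\rho|^{2}P_R+1)2^{r_q}+|h_I|^{2}P_I$ and the redundancy of the second MAC bound when $P_{W''}=0$ — correctly fills in the steps the paper leaves as ``straightforward consequences.''
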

\begin{rem} In the fading scenario for Scheme (C,S,2), the source does not know $h_{RD}$ and thus can not determine the instantaneous Wyner-Ziv compression rate to compress $X_I$ with respect to the destination observation. Therefore, for simplicity, we assume that the source neglects the side information available at the destination and does not perform binning. Recall that neglecting the side information corresponds to the case where $x=0$ in (\ref{(2,C,S,GC)2}). \end{rem}

\begin{prop}
For AID, the following rate is achievable for the multihop fading model:
\vspace{-0.4cm}\begin{equation}
R_{AID}=\resizebox{.8 \hsize}{!}{$\max_{\alpha}\mathbb{E}\left[  \log_{2}\left(  \frac{(|h_{RD}|^{2}%
(P_{R}-D))(|h_{RD}|^{2}(P_{R}-D)+|h_{I}|^{2}P_{I}+N_{eq})}{|h_{RD}|^{2}|h_{I}|^{2}%
(P_{R}-D)P_{I}(1-2Re(\alpha)+|\alpha|^{2})+N_{eq}(\alpha^{2}|h_{I}|^{2}P_{I}+|h_{RD}%
|^{2}(P_{R}-D))}  \right) \right]$}
\end{equation} \vspace{-1.2cm}\newline
where $N_{eq}=|h_{RD}|^{2}D+1$, $D=P_R 2^{-r_q}$ and $r_q=\mathbb{E}\left[  \mathcal{C}\left(  |h_{SR}|^{2}P_{S}\right)
\right] $. The source evaluates the signal to be transmitted by the relay when the relay utilizes the unstructured approach, namely DPC for $(R-D)$ ergodic channel. The source quantizes the corresponding signal with rate $r_q$ and forwards it to the relay. The relay simply forwards the received signal to destination.
\end{prop}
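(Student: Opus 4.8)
The plan is to reduce this scheme to the point-to-point fading DPC analysis already carried out for the no-relay case in (\ref{DPC_fading_Rate}), applied now to the $R$-$D$ link, while accounting for the fact that the relay receives the DPC codeword only through a rate-limited (analog) description sent over the $S$-$R$ link. Since $h_{SD}=0$, the source-to-relay channel $Y_R=h_{SR}X_{SR}+Z_R$ is a point-to-point fading channel with receiver CSI, whose capacity is $\mathbb{E}[\mathcal{C}(|h_{SR}|^{2}P_S)]$; this fixes the admissible quantization rate to $r_q=\mathbb{E}[\mathcal{C}(|h_{SR}|^{2}P_S)]$, as stated.

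First I would describe the AID construction in this setting. Pretending that the relay knew $(W,X_I^{n})$ and applied DPC on the ergodic $R$-$D$ channel, the source computes the corresponding relay input $X_R^{*n}$ of power $P_R$, using the same linear DPC assignment with a single inflation factor $\alpha$ employed in the no-relay derivation of (\ref{DPC_fading_Rate}); the factor is held fixed across fading realizations because neither transmitter has CSI with which to adapt the precoder. The source then quantizes $X_R^{*n}$ with a Gaussian rate-distortion code at rate $r_q$ and conveys the index to the relay over the $S$-$R$ link, which succeeds since $r_q$ equals the $S$-$R$ capacity. By the Gaussian rate-distortion function the reconstruction obeys the backward test channel $X_R^{*}=\hat X_R+Q$ with $Q$ independent of $\hat X_R$ and $\mathrm{Var}(Q)=D=P_R2^{-r_q}$, and the relay forwards $\hat X_R$.

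The key step is to recognize the resulting $R$-$D$ link as an effective point-to-point DPC channel. Because the quantization error $Q$ is a function of $X_R^{*n}$ alone and, in the test-channel description, is independent of $\hat X_R$, one checks that $Q$ is independent of \emph{both} the useful signal component and the interference $X_I$. Consequently the destination observation
\begin{equation*}
Y_D=h_{RD}\hat X_R+h_IX_I+Z_D=h_{RD}X_R^{*}+h_IX_I+(Z_D-h_{RD}Q)
\end{equation*}
is a DPC channel with useful signal power $P_R-D$, known interference $h_IX_I$ of power $P_I$, and effective noise $Z_D-h_{RD}Q$ of variance $N_{eq}=|h_{RD}|^{2}D+1$, both noise contributions being independent of the signal. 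Invoking the no-relay fading DPC rate (\ref{DPC_fading_Rate}) for this effective channel—that is, substituting $|h_{SD}|^{2}P_S\to|h_{RD}|^{2}(P_R-D)$ and replacing the noise variance $1$ by $N_{eq}$ wherever it occurs—and maximizing over the fixed $\alpha$ yields exactly $R_{AID}$. No explicit minimum with an $S$-$R$ term appears because the link is used purely to convey the analog description, whose imperfection is already captured through $D$.

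I expect the main obstacle to be the careful bookkeeping that makes the effective channel a \emph{clean} DPC channel: establishing that the quantization noise is jointly independent of the DPC signal and of $X_I$, so that the single-letter GP/DPC evaluation underlying (\ref{DPC_fading_Rate}) transfers verbatim with only the power and noise substitutions; verifying the power accounting in which the useful component $P_R-D$ and the quantization noise $D$ sum to the relay budget $P_R$; and justifying the use of a single $\alpha$ across all fading states. The reliable delivery of the quantization index over the fading $S$-$R$ channel and the rate-distortion computation giving $D=P_R2^{-r_q}$ are then routine, paralleling the AID construction of \cite{Zaidi_orthogonal}.
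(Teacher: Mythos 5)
Your overall strategy is the paper's: pin $r_q$ to the ergodic $S$--$R$ capacity, take distortion $D=P_R2^{-r_q}$, and then invoke the no-relay fading DPC rate (\ref{DPC_fading_Rate}) on the $R$--$D$ link with the substitutions $|h_{SD}|^2P_S\to|h_{RD}|^2(P_R-D)$ and noise $1\to N_{eq}$. But the step that is supposed to produce the ``clean'' effective DPC channel --- the step you yourself flag as the main obstacle --- is carried out incorrectly. With the backward test channel $X_R^{*}=\hat X_R+Q$, $Q\perp\hat X_R$, the quantization error $Q$ is \emph{not} independent of the DPC signal $X_R^{*}$; indeed $\mathbb{E}[X_R^{*}Q^{*}]=\mathrm{Var}(Q)=D>0$. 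Hence in your rewriting $Y_D=h_{RD}X_R^{*}+h_IX_I+(Z_D-h_{RD}Q)$ the effective noise is correlated with the signal, and the single-letter GP/DPC evaluation underlying (\ref{DPC_fading_Rate}) does not transfer. Your power accounting is also inconsistent: the signal in that channel equation is $X_R^{*}$, whose power is $P_R$, yet you assign it power $P_R-D$. These problems are not cosmetic. If the relay truly forwards $\hat X_R$ unscaled (power $P_R-D$), a correct evaluation of $I(U;Y_D)-I(U;X_I)$ with the linear assignment $U=X_R^{*}+\alpha X_I$ gives, per fading state, the ratio $P_R\,(S+I+1)\big/\bigl(SI|1-\alpha|^2+S(1+D)+D(I+1)+|\alpha|^2I\bigr)$ with $S=|h_{RD}|^2(P_R-D)$, $I=|h_I|^2P_I$, which is in general strictly smaller than the proposition's expression: for $|h_{RD}|=|h_I|=1$, $P_R-D=D=P_I=1$, the optimized ratio is $4/3$ for the scheme you describe versus $3/2$ for the claimed formula. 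So the scheme as you analyze it does not achieve the stated rate.

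The missing ingredient is the relay's power rescaling, which the paper makes explicit in its description of AID (``the relay simply forwards a \emph{scaled} version of the quantized signal''). The clean bookkeeping uses the \emph{forward} test channel: $\hat X_R=\rho X_R^{*}+Q'$ with $\rho=1-D/P_R=1-2^{-r_q}$, where $Q'$ is independent of $(X_R^{*},X_I)$ and $\mathrm{Var}(Q')=\rho D$; the relay transmits $\kappa\hat X_R$ with $\kappa=\sqrt{P_R/(P_R-D)}$, meeting its power constraint with equality. Then $Y_D=h_{RD}\kappa\rho X_R^{*}+h_IX_I+(h_{RD}\kappa Q'+Z_D)$, the received useful power is $|h_{RD}|^2(\kappa\rho)^2P_R=|h_{RD}|^2(P_R-D)$, and the effective noise $h_{RD}\kappa Q'+Z_D$ has variance $|h_{RD}|^2D+1=N_{eq}$ and is genuinely independent of both $X_R^{*}$ and $X_I$. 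With this decomposition (and after reabsorbing the deterministic factor $\kappa\rho$ into the free inflation parameter $\alpha$), your substitution into (\ref{DPC_fading_Rate}) goes through verbatim and yields exactly the proposition; the rest of your argument --- the choice of $r_q$, the fixed $\alpha$ due to lack of CSIT, and the absence of a separate $S$--$R$ term in the minimum --- is sound.
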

\vspace{-0.4cm}
\section{Numerical Results}

\label{num_res} In this section, we numerically evaluate the achievable rates
for the AWGN models, both with no fading and with ergodic fading, and compare
them with two following simple schemes.

\begin{itemize}
\item \textit{Scheme No Relay (NR)}: The achieved rate is given by \cite{Costa}
and denoted as $R_{NR}$;

\item \textit{Scheme No Interference (NI)}: We set $P_{I}=0$ and $R_{I}=0,$ so
that the interference is not present. The capacity for this scenario, $R_{NI}$, is
achieved by PDF \cite{ElGamal} and is given by (\ref{(2,D,U,GC)}) with
$R_{I}=0$. Note that $R_{NI}$ provides an upper
bound to rates of the proposed achievable schemes.
\end{itemize}
\vspace{-0.6cm}
\subsection{No Fading}

We first consider the no fading case. In Fig. \ref{fig_1}, the achievable rates are illustrated as a function of the
interference power $P_{I}$ for $P_{S}=P_{R}=10dB$, $|h_{SD}|=|h_{SR}%
|=|h_{RD}|=|h_{I}|=1$ and $R_{I}=1$ bits/channel use. Scheme (C,U) outperforms
all others for low interference power, since in this case cooperative
interference mitigation strategies are not worth the capacity needed on the
source-relay link for digital interference sharing. Moreover, leveraging the
interference structure is not useful since interference decoding at the
destination is hindered by the low interference power. For large $P_{I}$,
Scheme (C,S,2) instead outperforms all others and eventually meets the upper bound $R_{NI}%
$. The larger $P_{I}$ is, the less power the source and the relay need to make the
interference decodable at the destination. In fact if $P_{I}$ is sufficiently
large, the destination is able to decode the interference without the help of
the source or the relay and the schemes which utilize structured approach, namely (C,S,1) and (C,S,2) achieve interference-free bound and hence they are optimal. We also note that as the interference power increases, Schemes (C,S,1) and (C,S,2) perform the same and have $r_q=0$ which means that the relay is utilized only for forwarding the source message. Scheme (D,U) completely
eliminates the interference by MU-DPC when $R_{I}$ is greater than the
capacity of the source-relay link, as is the case here, and hence, the
performance of Scheme (D,U) is independent of the interference power. However,
there is a gap between the performance of Scheme NI and Scheme (D,U) due to
the source-relay capacity used for informing the relay about the interference.
Similarly the performance of the scheme (AID) also does not depend on the interference power.

In Fig. \ref{fig_2}, we set the source-relay channel gain to $|h_{SR}|=2$
and keep the rest of the parameters same as Fig. \ref{fig_1} in order to study the effects of a higher gain for source-relay channel. We observe that Scheme (D,U) outperforms all
schemes for moderate interference power $P_{I}$. Now the source
and the relay are able to better mitigate the interference jointly since the (S-R) channel has enough capacity for conveying digital interference information to the relay. In fact for large $|h_{SR}|$, the capacity of source-relay channel is high enough to share the interference with the relay digitally at no extra cost and Scheme (D,U) achieves the interference-free upper bound. In Fig. \ref{fig_3}, we increase the
interference rate and set $R_{I}=3$ bits/channel use by keeping the rest of
the parameters the same as for Fig. \ref{fig_2}. We observe that (AID) outperforms (D,U) as well as all other schemes for moderate
interference power. Since the interference rate is large compared to the source relay channel capacity, informing the relay about the
interference in a digital fashion becomes too costly.


Finally, we illustrate the achievable rate as a function of $R_{I}$ in Fig.
\ref{fig_5} for the multihop relay channel $|h_{SD}|=0$ and we set $P_{S}=P_{R}=P_{I}=10dB$  and $|h_{SR}|=|h_{RD}%
|=|h_{I}|=1$. We also include Scheme NL-DF whose performance is independent of $R_I$. For small interference rate, schemes that exploit the interference structure at the destination, namely (C,S,1) and (C,S,2), result in the best rate and achieve no-interference upper bound. As the interference rate increases, schemes (D,U), (C,S,1) and (C,S,2) degrade in performance since it is harder to decode the interference at either the relay or the destination. Note also that for moderate interference rates,
Scheme (C,S,2) outperforms all others, showing that interference sharing via
compressed information along with a structured approach is the most beneficial
strategy in this regime.
\vspace{-0.4cm}
\subsection{Ergodic Fading}
\label{Sec_num_ef}
In this section, we turn to fading channels. We first consider the
point-to-point case, i.e., $h_{SR}=h_{RD}=0$. In Fig. \ref{fading_fig_2}, we
illustrate the rate as a function of the interference power for $P_{S}=5dB$
and Ricean factor $K=1$ for both $h_{SD}$ and $h_{I}$ channel gains. As the
interference power increases, the structured approach outperforms the
unstructured one. Recall that, in the case of no fading unstructured approach, namely DPC, achieves the no-interference upper bound
and hence is optimal. However, for fading channels with no channel knowledge at the source, the unstructured approach is not able to
completely cancel the interference anymore, and the structured approach becomes beneficial when the
interference power is large. To get further insights on this, in Fig.
\ref{fading_fig_1}, the rate as a function of parameter $K$, common for $h_{SD}$ and $h_I$, is illustrated for
various interference rates when $P_{S}=P_{I}=5$dB. We observe that as $K$ increases,
the gap between the no-interference upper bound and the performance of the
unstructured approach decreases and, as $K\rightarrow\infty$, the unstructured
approach achieves the no-interference bound. This is expected since, as
$K\rightarrow\infty,$ the channel model becomes equivalent to the no-fading case. For
small $K$, instead, the structured approach outperforms the structured
approach for small $R_{I}$.

Finally, we study multihop relay channel where $h_{SD}=0$ and $h_{SR}$, $h_{RD}$
and $h_{I}$ are Ricean distributed with the same parameter $K$. In Fig.
\ref{fading_fig_3}, the rate as a function of interference power is
illustrated when $P_{S}=10$dB, $P_{R}=7$dB, $R_{I}=0.4$ bits/channel use and
$K=1$. We do not include Scheme (C,S,1) in Fig. \ref{fading_fig_3} since it is dominated by Scheme (C,S,2) for the chosen parameters. Since the source has more power than the relay, the second
hop is the bottleneck. Therefore, interference management in the second
hop becomes critical and the relay should be informed about the interference. Also, for this scenario digital interference sharing performs better than compressed interference sharing. Comparing Schemes (D,U) and (D,S), we observe that while in the no fading case (D,S) is always inferior to (D,U), under fading this is no longer true and Scheme (D,S) outperforms (D,U) for large interference power.
\vspace{-0.4cm}

\section{Conclusion}

\label{conc} A relay channel with orthogonal components that is corrupted by a
single external interferer is studied. The interference is non-causally
available only at the source, but not at the relay or at the destination. The
interference is assumed to be structured, since it corresponds to a codeword
of the codebook of the interferer, whose transmission strategy is assumed to
be fixed. We complement previous work that studied the model under the
assumption of unstructured interference by establishing achievable schemes
that leverage the interference structure. Effective interference management
calls, on the one hand, for appropriate communication strategies towards the
relay in order to enable cooperative interference management, and, on the
other, for the design of joint encoding/decoding strategies. Our works sheds
light on the optimal design for DMC and AWGN\ channels with and without
fading. The best available transmission strategies turn out to depend
critically on the parameters of the interference signal (such as interference
power and transmission rate) and on the channel model.
\vspace{-0.6cm}
\appendices

\section{Proof of Proposition \ref{converse}}
\vspace{-0.4cm}
\label{appendix_converse}
From Fano's inequality, we have $H(W|Y_{D1}^{n},Y_{D2}^{n})\leq n\epsilon_{n}$,
where $\epsilon_{n}\rightarrow0$ as $n\rightarrow\infty,$ if the probability
of error goes to zero as $n\rightarrow\infty$, and thus
\vspace{-0.6cm}
\begin{subequations}
\begin{align}
nR &  \leq I(W;Y_{D_{1}}^{n}Y_{D_{2}}^{n})+n\epsilon_{n}\\
&  =I(W;Y_{D_{1}}^{n}|Y_{D_{2}}^{n})+I(W;Y_{D_{2}}^{n})+n\epsilon_{n} \label{App_C_2}\\
&  =h(Y_{D_{1}}^{n}|Y_{D_{2}}^{n})-h(Y_{D_{1}}^{n}|Y_{D_{2}}^{n}%
,W)+h(Y_{D_{2}}^{n})-h(Y_{D_{2}}^{n}|W)+n\epsilon_{n}\\
&  \leq h(Y_{D_{1}}^{n})-h(Y_{D_{1}}^{n}|X_{SD}^{n},Y_{D_{2}}^{n}%
,W)+h(Y_{D_{2}}^{n})-h(Y_{D_{2}}^{n}|X_{SR}^{n},W)+n\epsilon_{n} \label{App_C_4}\\
&  =h(Y_{D_{1}}^{n})-h(Y_{D_{1}}^{n}|X_{SD}^{n})+h(Y_{D_{2}}^{n})-h(Y_{D_{2}%
}^{n}|X_{SR}^{n})+n\epsilon_{n} \label{App_C_5} \\
&  \leq nI(X_{SD};Y_{D_{1}})+nI(X_{SR}^{n};Y_{D_{2}}^{n})+n\epsilon_{n} \label{App_C_6}%
\end{align}
\end{subequations}
\vspace{-1.2cm}\newline
where we have used the chain rule of mutual information in (\ref{App_C_2}), the fact that conditioning reduces entropy \cite{Cover} in (\ref{App_C_4}) and
the Markov chains $(Y_{D_{2}}^{n},W)-X_{SD}^{n}-Y_{D_{1}}^{n}$ and
$W-X_{SR}^{n}-Y_{D_{2}}^{n}$ in (\ref{App_C_5}). In (\ref{App_C_6}), we used the
same steps in the standard converse of a point-to-point channel which shows
that $h(Y_{D_{1}}^{n})-h(Y_{D_{1}}^{n}|X_{SD}^{n})\leq nI(X_{SD};Y_{D_{1}})$
for $X_{SD}=X_{SD,Q}$ and $Y_{D_{1}}=Y_{D1,Q}$ with $Q$ being a uniformly
distributed random variable in the set $[1,...,n]$ \cite{Cover}. This
concludes the proof.

\vspace{-0.4cm}

\pagebreak \begin{figure}[t]
\center
\includegraphics[scale=0.32]{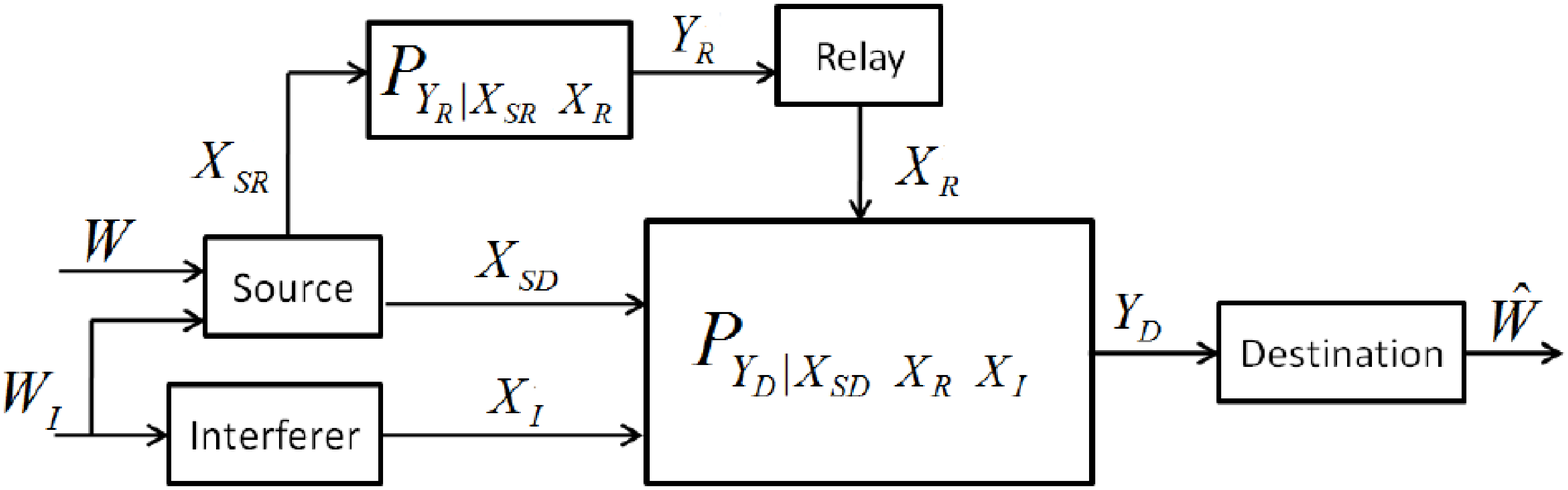} \caption{Relay channel with orthogonal components under structured
interference known at the source.}
\label{More_simplified_relay_channel_w_interference_informed_source} \end{figure}

\begin{figure}[t]
\center
\includegraphics[scale=0.26]{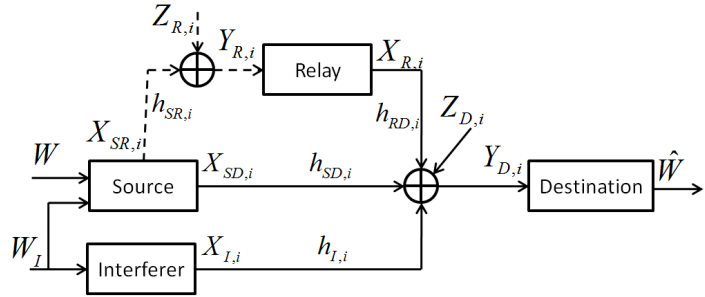} \caption{AWGN relay channel with orthogonal components under structured
interference known at the source where the dashed line denotes the out-of-band
channel between the source and the relay.} \label{More_simplified_relay_channel_w_interference_informed_source_awgn} \end{figure}

\begin{figure}[t]
\center
\includegraphics[scale=0.32]{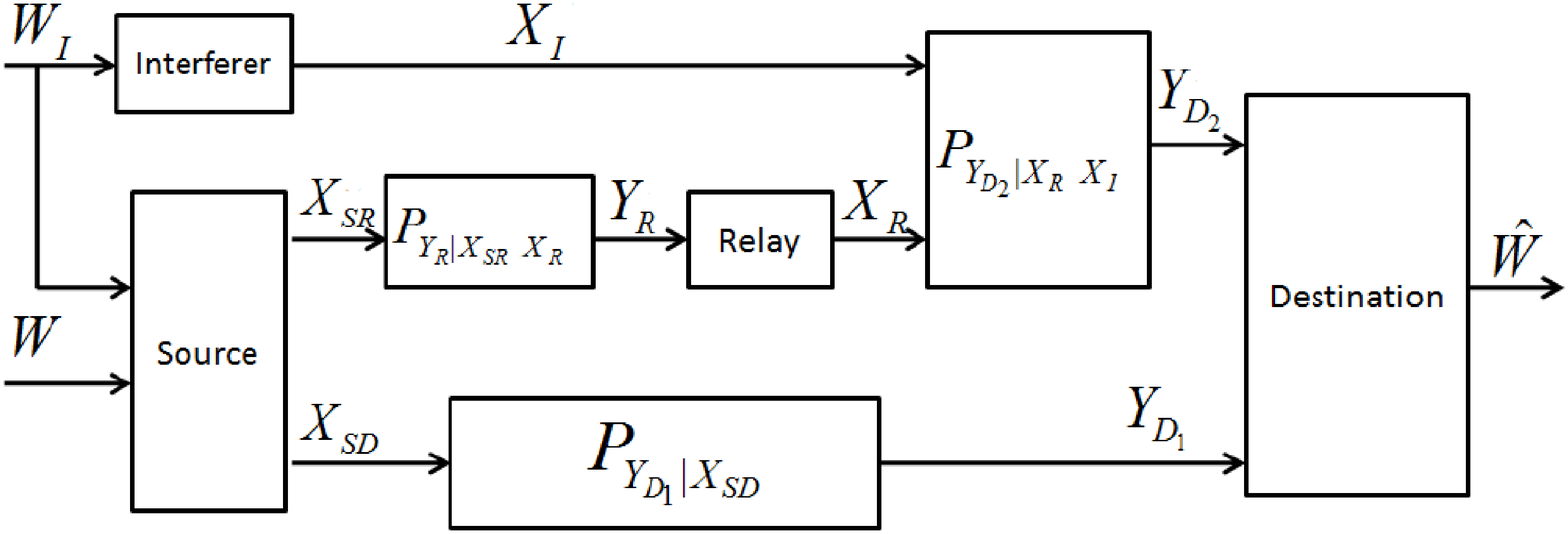} \caption{Special class of relay channel with orthogonal components under
structured interference known at the source.} \label{Special_class_relay_channel_w_interference_informed_source}\end{figure}

\begin{figure}[t]
\center
\includegraphics[scale=0.4]{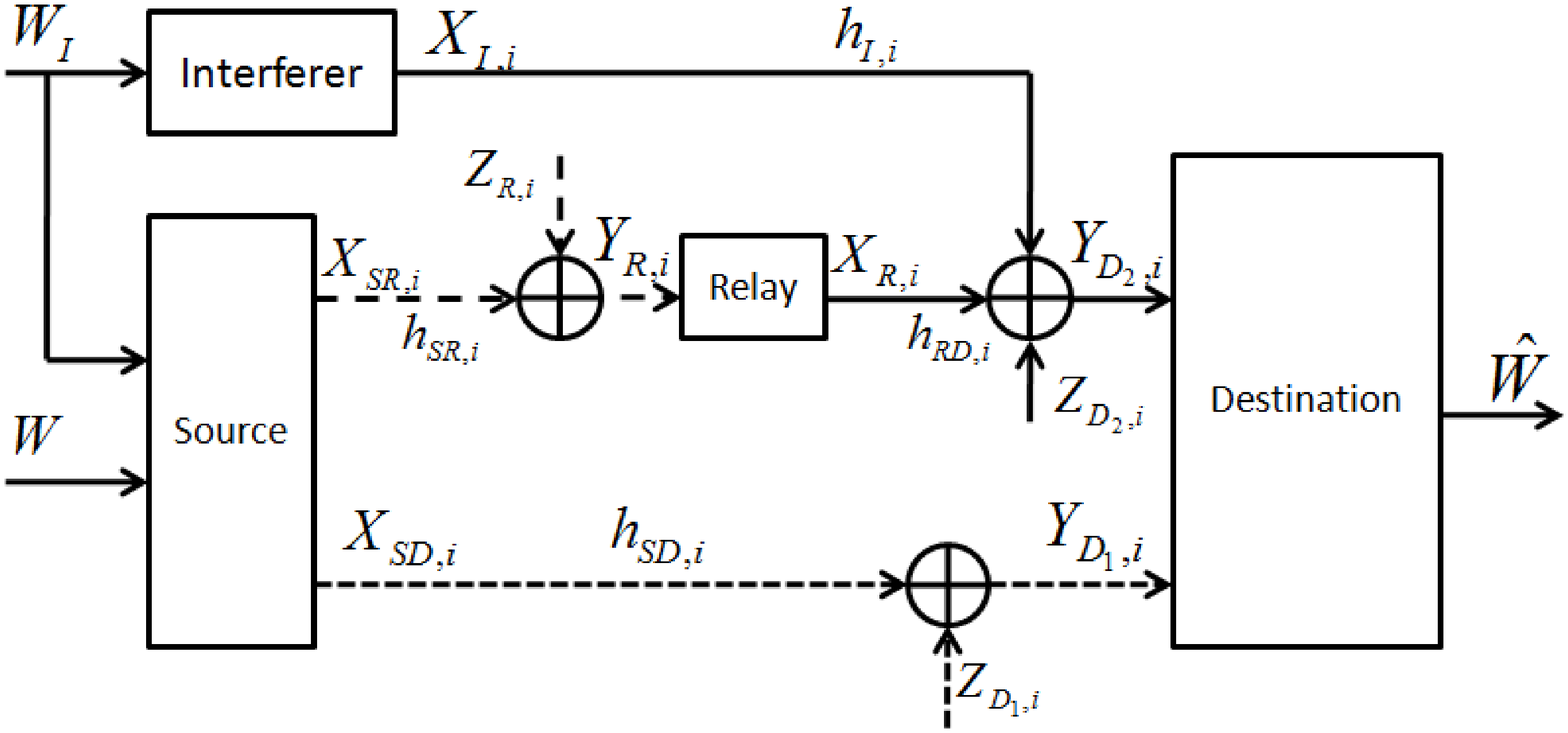} \caption{Special class of AWGN relay channel with orthogonal components under
structured interference known at the source for independent sources.}
\label{Special_class_relay_channel_w_interference_informed_source_awgn_special}\end{figure}

\begin{figure}[t]
\center
\includegraphics[scale=0.8]{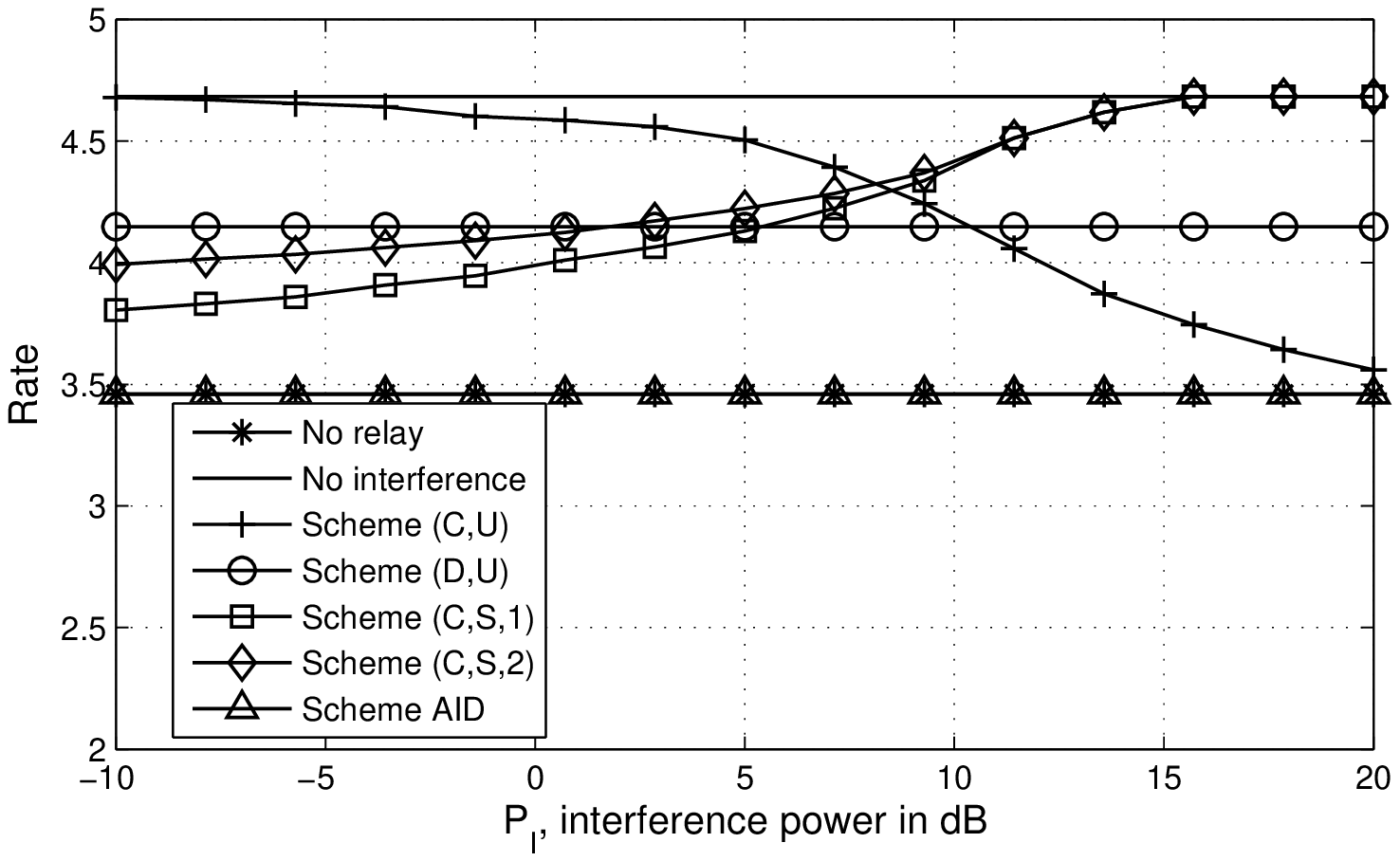} \caption{Achievable rate as a
function of $P_{I}$ when $P_{S}=P_{R}=10dB$, $|h_{SD}|=|h_{SR}|=|h_{RD}%
|=|h_{I}|=1$ and $R_{I}=1$.}%
\label{fig_1}%
\end{figure}

\begin{figure}[t]
\center
\includegraphics[scale=0.8]{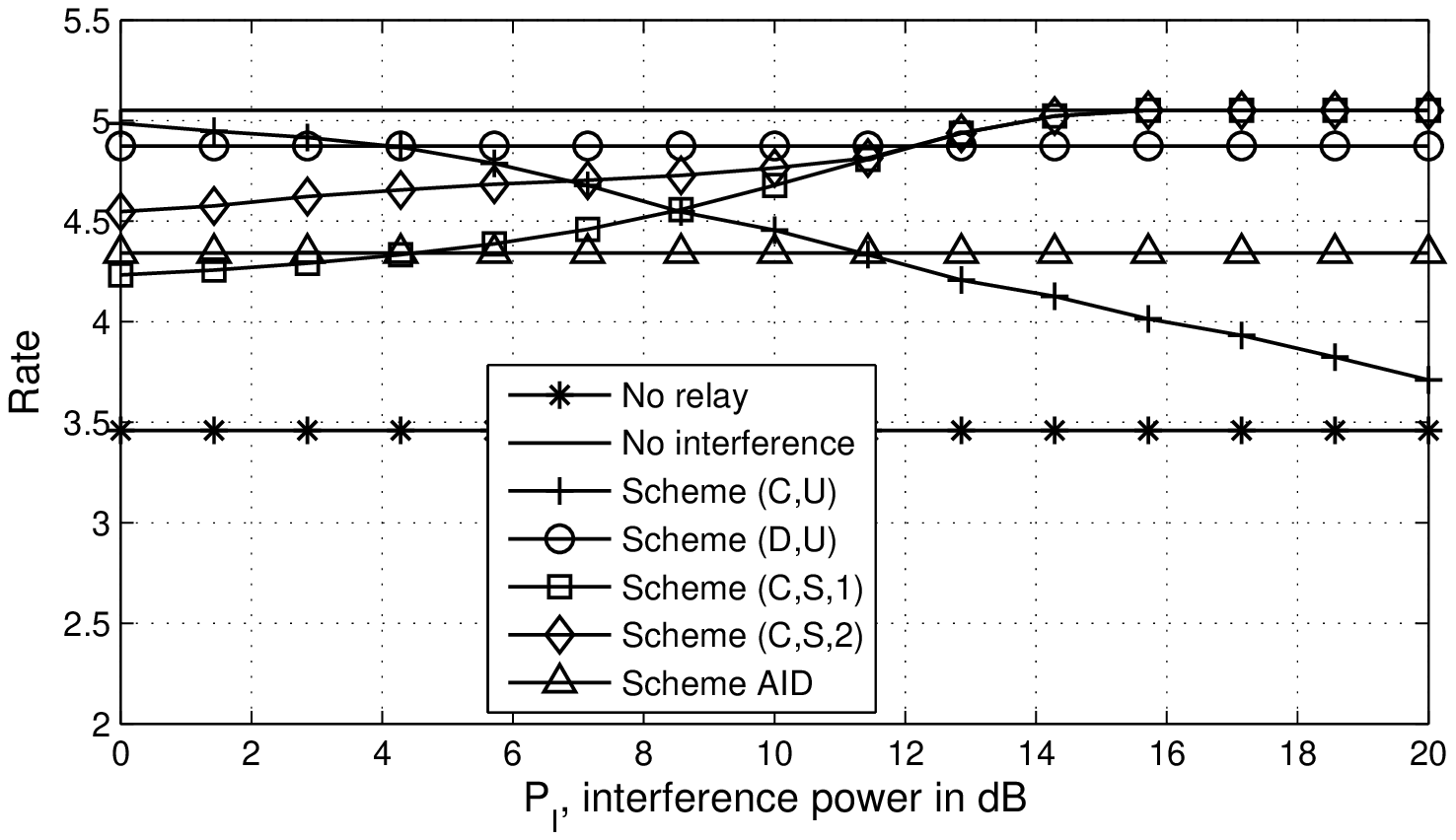} \caption{Achievable rate as a
function of $P_{I}$ when $P_{S}=P_{R}=10dB$, $|h_{SR}|=2$, $|h_{SD}%
|=|h_{RD}|=|h_{I}|=1$ and $R_{I}=1$.}%
\label{fig_2}%
\end{figure}
\begin{figure}[t]
\center
\includegraphics[scale=0.8]{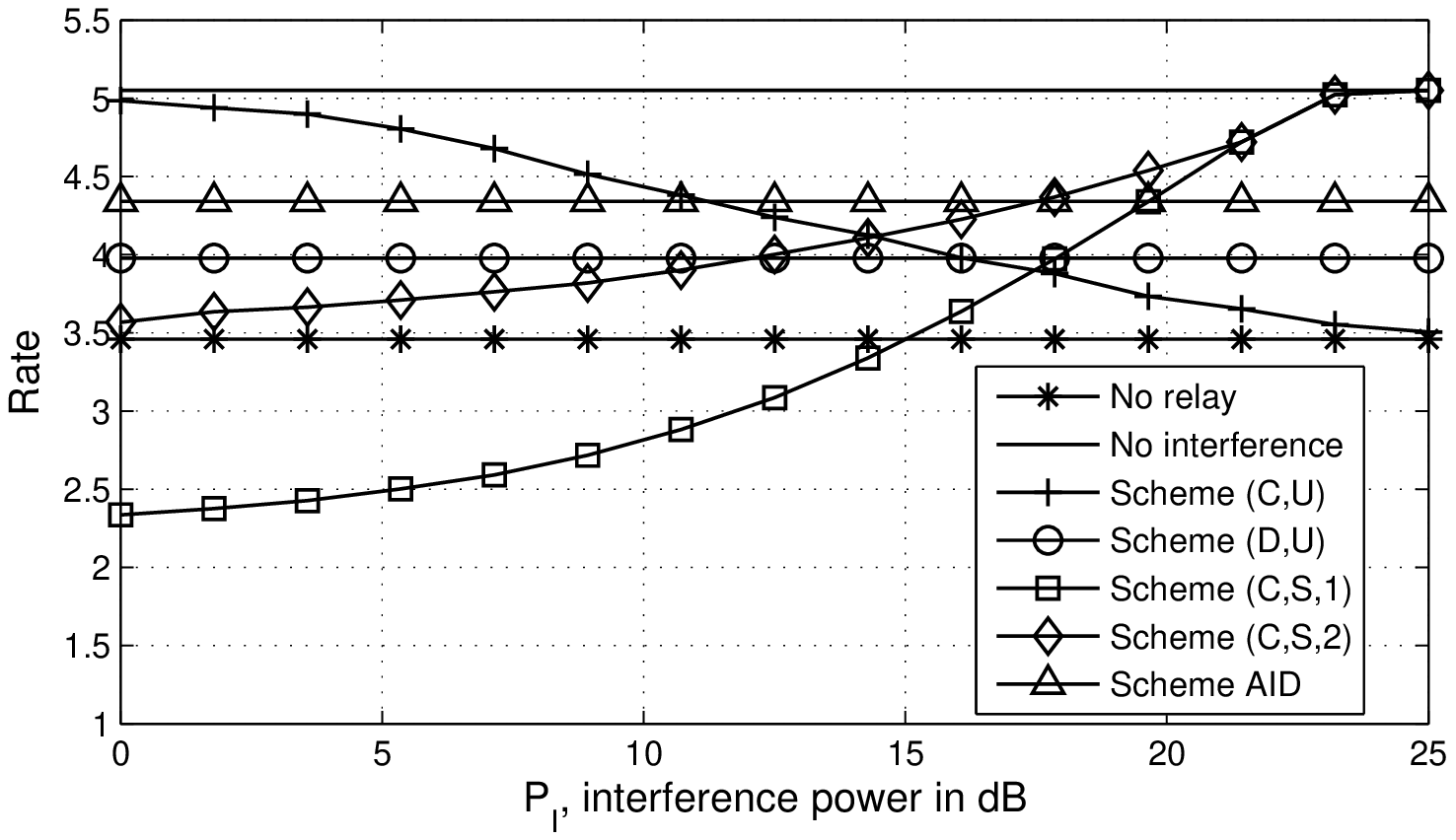} \caption{Achievable rate as a
function of $P_{I}$ when $P_{S}=P_{R}=10dB$, $|h_{SR}|=2$, $|h_{SD}%
|=|h_{RD}|=|h_{I}|=1$ and $R_{I}=3$.}%
\label{fig_3}%
\end{figure}


\begin{figure}[t]
\center
\includegraphics[scale=0.8]{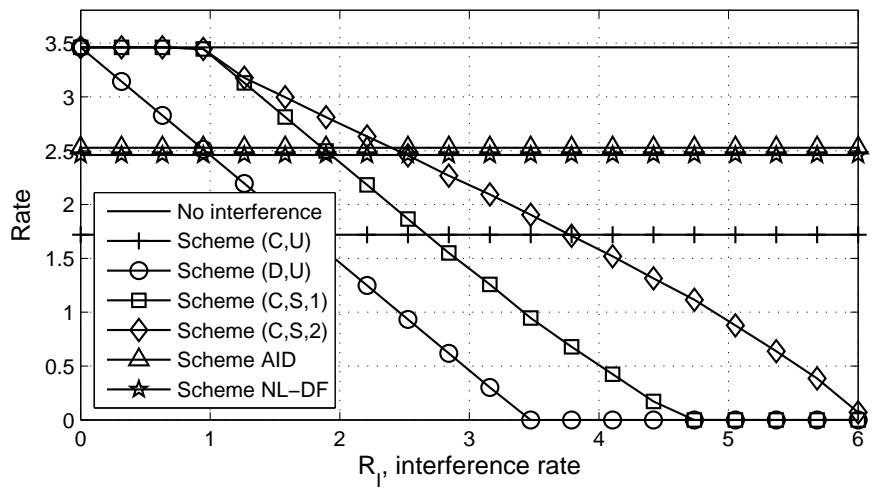} \caption{Achievable rate
as a function of $R_{I}$ for the multihop channel ($h_{SD}=0$)  when $P_{S}=P_{R}=P_{I}=10dB$, $|h_{SR}%
|=|h_{RD}|=|h_{I}|=1$.}%
\label{fig_5}%
\end{figure}

\begin{figure}[t]
\center
\includegraphics[scale=0.8]{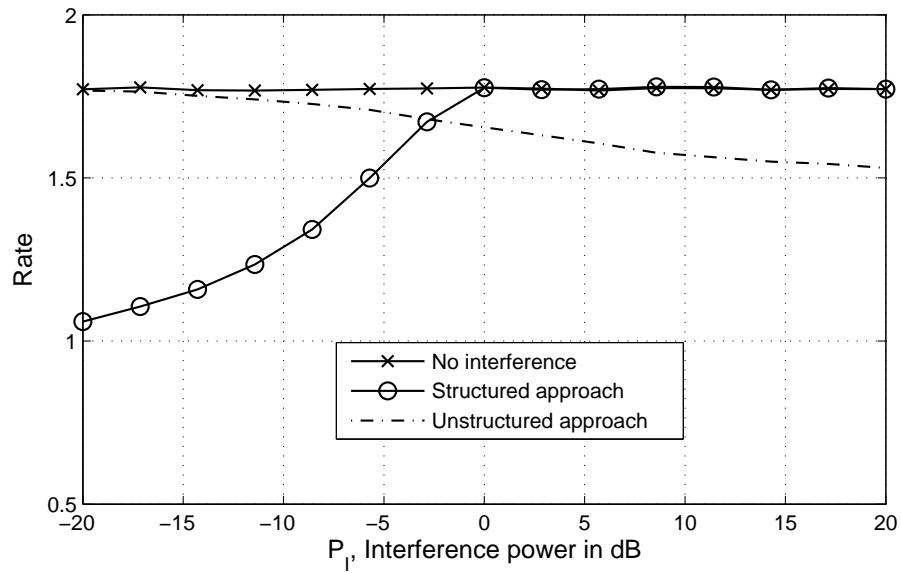} \caption{Achievable rate
as a function of $P_I$ for point to point fading channel with no
CSIT when $P_{S}=5dB$, $K=1$. }%
\label{fading_fig_2}%
\end{figure}

\begin{figure}[t]
\center
\includegraphics[scale=0.8]{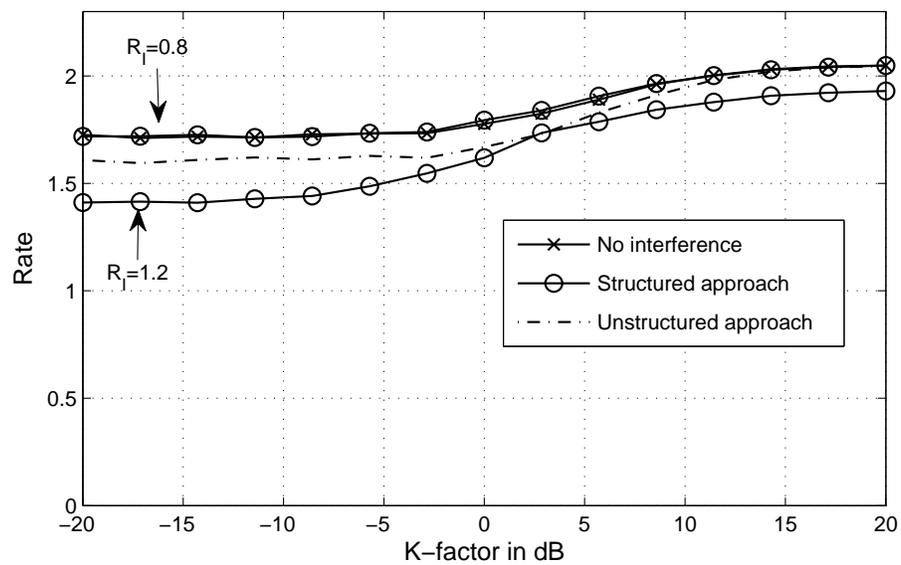}
\caption{Achievable rate as a function of K-factor for point to point fading channel with
no CSIT for various interference rates when $P_{S}=P_{I}=5dB$. }%
\label{fading_fig_1}%
\end{figure}

\begin{figure}[t]
\center
\includegraphics[scale=0.8]{}
\caption{Achievable rate as a function of $P_I$ for multihop fading channel
with no CSIT when $P_{S}=10$dB, $P_{R}=7$dB, $R_{I}=0.4$, $K=1$ and $N=1$. }%
\label{fading_fig_3}%
\end{figure}




\end{document}